\documentclass[sigconf,nonacm]{acmart}

\AtBeginDocument{%
  \providecommand\BibTeX{{%
    \normalfont B\kern-0.5em{\scshape i\kern-0.25em b}\kern-0.8em\TeX}}}

\usepackage{algorithm}
\usepackage{enumitem}
\usepackage{tabularx}
\usepackage{balance}


\usepackage{soul} 
\usepackage{proba} 
\usepackage{amsmath}
\usepackage{amsthm}
\usepackage{amsfonts}
\usepackage{mathtools}
\usepackage{xspace}
\usepackage{bm}
\usepackage{nicefrac}
\usepackage{commath}
\usepackage{bbm}
\usepackage{boxedminipage}
\usepackage{xparse}
\usepackage{xcolor}
\usepackage{float}
\usepackage{multirow}
\usepackage{graphicx}
\usepackage{caption}
\usepackage{subcaption}
\usepackage{ifthen}
\usepackage{algpseudocode}
\usepackage{colortbl} 


\usepackage{tikz}
\usetikzlibrary{positioning}
\usetikzlibrary{fit}
\usetikzlibrary{calc}
\usetikzlibrary{backgrounds}
\usetikzlibrary{shapes}
\usetikzlibrary{patterns}
\usetikzlibrary{matrix}

\allowdisplaybreaks





\newtheorem{theorem}{Theorem}
\newtheorem{problem}{Problem}

\newtheorem{lemma}{Lemma}
\newtheorem{claim}{Claim}

\newtheorem{observation}{Observation}

\newcommand{\namedref}[2]{\hyperref[#2]{#1~\ref*{#2}}\xspace}











\newcommand{\argmax}[0]{\ensuremath{\mathop{\mathrm{argmax}}~}\xspace}
\newcommand{\argmin}[0]{\ensuremath{\mathop{\mathrm{argmin}}~}\xspace}




  \newcommand{\eps}[0]{\varepsilon}
  \let\epsilon\eps
  

  \newcommand{\cE}{\ensuremath{{\mathcal E}}\xspace}

  \newcommand{\cH}{\ensuremath{{\mathcal H}}\xspace}

  \newcommand{\cL}{\ensuremath{{\mathcal L}}\xspace}
  
  \newcommand{\cN}{\ensuremath{{\mathcal N}}\xspace}










  
  





  \let\leq\leqslant
  \let\le\leqslant
  \let\geq\geqslant
  \let\ge\geqslant
  

  




\newcommand{\vol}[0]{\pred{Vol}\xspace}

\newcommand{\Vol}{\text{Vol}}
\providecommand{\ind}[1]{\mathbbm{1}_{\{#1\}}}

\renewcommand{\vol}{\text{Vol}}

\providecommand{\nvol}{\overline{\text{Vol}}}

\providecommand{\ndeg}{\overline{\text{deg}}}

\providecommand{\ST}{s\text{-}t}
\providecommand{\minst}{\textbf{min-st-cut}}
\providecommand{\cut}{\textbf{cut}}

\newcommand{\cutforspace}[2]{#1}

\begin{document}

\title{Densest Subhypergraph: Negative Supermodular Functions and Strongly Localized Methods}

\author{Yufan Huang}
\email{huan1754@purdue.edu}
\affiliation{
  \institution{Purdue University}
  \city{West Lafayette}
  \state{IN}
  \country{USA}
}
\author{David F. Gleich}
\email{dgleich@purdue.edu}
\affiliation{%
  \institution{Purdue University}
  \city{West Lafayette}
  \state{IN}
  \country{USA}
}

\author{Nate Veldt}
\email{nveldt@tamu.edu}
\affiliation{%
  \institution{Texas A\&M University}
  \city{College Station}
  \state{TX}
  \country{USA}
  }

\renewcommand{\shortauthors}{Yufan Huang, David F. Gleich and Nate Veldt}

\begin{abstract}
Dense subgraph discovery is a fundamental primitive in graph and hypergraph analysis which among other applications has been used for real-time story detection on social media and improving access to data stores of social networking systems. We present several contributions for localized densest subgraph discovery, which seeks dense subgraphs located nearby given seed sets of nodes. We first introduce a generalization of a recent \emph{anchored densest subgraph} problem, extending this previous objective to hypergraphs and also adding a tunable locality parameter that controls the extent to which the output set overlaps with seed nodes. Our primary technical contribution is to prove when it is possible to obtain a strongly-local algorithm for solving this problem, meaning that the runtime depends only on the size of the input set. We provide a strongly-local algorithm that applies whenever the locality parameter is not too small, and show via counterexample why that strongly-local algorithms are impossible below a certain threshold. Along the way to proving our results for localized densest subgraph discovery, we also provide several advances in solving global dense subgraph discovery objectives. This includes the first strongly polynomial time algorithm for the densest supermodular set problem and a flow-based exact algorithm for a heavy and dense subgraph discovery problem in graphs with arbitrary node weights. We demonstrate our algorithms on several web-based data analysis tasks.
\end{abstract}

\begin{CCSXML}
<ccs2012>
   <concept>
       <concept_id>10003752.10003809.10003635.10003644</concept_id>
       <concept_desc>Theory of computation~Network flows</concept_desc>
       <concept_significance>500</concept_significance>
       </concept>
   <concept>
       <concept_id>10003752.10003809.10003716.10011136.10011137</concept_id>
       <concept_desc>Theory of computation~Network optimization</concept_desc>
       <concept_significance>500</concept_significance>
       </concept>
   <concept>
       <concept_id>10003752.10003809.10003716.10011138.10010041</concept_id>
       <concept_desc>Theory of computation~Linear programming</concept_desc>
       <concept_significance>300</concept_significance>
       </concept>
   <concept>
       <concept_id>10002950.10003624.10003633.10003637</concept_id>
       <concept_desc>Mathematics of computing~Hypergraphs</concept_desc>
       <concept_significance>500</concept_significance>
       </concept>
   <concept>
       <concept_id>10002951.10003260.10003277</concept_id>
       <concept_desc>Information systems~Web mining</concept_desc>
       <concept_significance>300</concept_significance>
       </concept>
 </ccs2012>
\end{CCSXML}

\ccsdesc[500]{Theory of computation~Network flows}
\ccsdesc[500]{Theory of computation~Network optimization}
\ccsdesc[300]{Theory of computation~Linear programming}
\ccsdesc[500]{Mathematics of computing~Hypergraphs}
\ccsdesc[300]{Information systems~Web mining}

\keywords{Densest Subgraph, Strongly Localized Graph Algorithms, Hypergraph Algorithms, Maximum Flow, Supermodular Functions}



\maketitle

\section{Introduction}
A common paradigm in unsupervised data analysis is to take as input a graph or hypergraph and to output extremal subsets. The types of extremal subsets range from sets of minimum cut~\cite{stoer1997simple} to minimum sparsest cut~\cite{matula1990sparsest} to maximum clique~\cite{bomze1999maximum}. The underlying hypothesis is that extremal sets reflect important and noteworthy structures that are informative for exploratory data analysis, or useful for downstream algorithms such as graph partitioners or machine learning pipelines that operate on subsets of the larger graph.
This basic paradigm is fundamental in Web analysis and applied to problems such as detecting real-time stories on social media~\cite{angel2014dense}, improving access to data stores of social-networking systems~\cite{gionis2013piggybacking}, and a wide variety of clustering and community detection tasks over web-based datasets~\cite{veldt2018correlation,leskovec2009community,andersen2006communities,flake2000efficient, KKMGraph2019}.

An issue with this paradigm is that there are many cases where extremal sets are trivial or simply unuseful. For instance, the minimum cut set in an unweighted graph with any degree 1 node is just that single node, which yields little information; a set of minimum conductance may be a simple small subgraph that just happens to have a small number of bridges to the rest of the graph~\cite{leskovec2009community}. A second and related challenge is that finding the extremal subset typically results in an NP-complete problem. Even if solved approximately, it may still consume substantial time. 

\emph{Localized} graph algorithms are a practical solution to this problem. The idea is that we rephrase the extremal search problem with respect to a reference set of nodes $R$. For instance, we may want the solution \emph{within} $R$ or \emph{nearby} $R$ with some measure of distance or fraction of $R$. This area has been most extensively developed in the space of algorithms for finding small conductance cuts in a graph where techniques range between spectral methods~\cite{andersen2006communities,AndersenChungLang2006}, flow methods~\cite{LangRao2004,orecchia2014flow,Veldt2019flow,wang2017capacity,andersen2008algorithm,Fountoulakis2023flow}, and heat-kernel diffusions~\cite{kloster2014heat}. These techniques have also been extended to hypergraph analysis~\cite{VBKMinimizing2020,ibrahim2020local,liu2021strongly,TMIYHypergraph2020a,IMTYFinding2022a}. For methods that are able to effectively \emph{grow} small subsets, then $R$ may be as small as a single node; whereas for other techniques that \emph{shrink or adapt} $R$, then $R$ must be considerably larger. Often, a goal with these algorithms is to get a strongly localized runtime guarantee such that the total runtime scales with the size of the output instead of the size of the input graph. Using a localized algorithm enables one to analyze \emph{many} interesting sets in the graph by varying the reference set $R$. These localized algorithms have already been widely used in web-based data analysis tasks such as detecting related retail products on Amazon~\cite{kloster2014heat,Veldt2019flow,VBKHypergraph2020}, identifying groups of same-topic posts on Stackoverflow~\cite{VBKHypergraph2020}, clustering restaurants based on reviews on Yelp~\cite{liu2021strongly}, and finding communities in various types of online social networks~\cite{kloster2014heat,wang2017capacity,Veldt2019flow}.

Although many extremal set problems in graph analysis focus on finding small \emph{cut} values, another perspective on extremal sets seeks \emph{high density} independently of cut values. The \emph{densest subgraph} is one such example that seeks a subgraph $S$ of maximum average degree. In a small surprise, this subset can be computed in polynomial time by a classic flow algorithm~\cite{GolFinding1984} or via linear programming~\cite{ChaGreedy2000}. A simple peeling algorithm that removes vertices from the graph one at a time has long been known to be a 2-approximation for the problem~\cite{ChaGreedy2000}. More recently, Boob et al. proposed an iterated peeling algorithm that has been shown to converge to the optimal solution~\cite{CQTDensest2022}. Many variants and generalizations of the densest subgraph problem have been studied and considered (see~\cite{LMFBSurvey2023} for a very recent survey). One of the most general of these is the densest supermodular subset problem (DSS) introduced by Chekuri et al.~\cite{CQTDensest2022}, where the goal is to maximize the ratio between a nonnegative monotone supermodular function $f$ and the size of the returned set. Localized variants of the densest subgraph problem have also been considered recently~\cite{DQCAnchored2022}. However, localized algorithms for dense subgraph discovery remain underexplored and remain far less understood than localized algorithms for finding small cuts.

In this paper we greatly expand the scope of possible algorithms for dense subgraph computations, both in terms of global and local variants of the problem. We first provide a simple reduction that leads to efficient exact algorithms for a more general version of the densest supermodular subset (DSS) problem where the supermodular function does not need to be nonnegative (Theorem~\ref{thm:general-to-nonnegative}). This captures several dense subgraph problems that are not special cases of the standard nonnegative DSS problem~\cite{MKFinding2018,DQCAnchored2022}. We then provide the first strongly polynomial algorithm for DSS (Algorithm~\ref{alg:density-improve}, Theorem~\ref{thm:strongly-polynomial}); previous approaches came with weakly polynomial runtimes. Our final contribution to \emph{global} dense subgraph discovery algorithms is to design a flow-based exact algorithm for finding the densest subset of a node-weighted graph or hypergraph. Prior research on this problem showed how to obtain efficient flow-based solutions in the case of graphs with strictly non-negative weights~\cite{GolFinding1984,FLD+Discovering2022}; our results show how this can be extended to arbitrary node weights (Section~\ref{sec:flow-based-exact}). 

In addition to our results for global dense subgraph discovery, we greatly advance the state of the art in localized densest subgraph computations (Section~\ref{sec:strongly-local-flow}). First, we establish a parametric formulation of the discrete objective function underlying localized densest subgraph discovery (Problems~\ref{prob:adsh},~\ref{prob:adshf}). This allows us to vary the degree of localization and continuously tradeoff between the degree of localization and the amount of computation. We explicitly delineate the region of strong locality where algorithms can have a runtime that scales independently of graph size (Theorem~\ref{thm:slocal}). Moreover, we show hypergraph generalizations of all of these algorithms. Our methods use max-flow / min-cut computations as a primitive and we show (in the appendix) counter-examples where standard \emph{peeling} methods cannot approximate these objectives  at all. 

We demonstrate the advantages of the techniques on a variety of web-relevant datasets. This includes a hypergraph of web domains where hypergraphs are induced by hosts (Section~\ref{sec:domain-hypergraph}). We show how our localized algorithms can help identify a densely connected set of about 1300 academic domains around the world.

\section{Preliminaries and Related Work}
Let \(G = (V, E)\) denote a graph with vertex set \(V\) and edge set \(E\).
Let \(\cH = (V, \cE)\) be a hypergraph with vertex set \(V\) and hyperedge set \(\cE\).
Each hyperedge \(e \in \cE\) is a subset of \(V\) and a graph is the special case of a hypergraph with \(|e| = 2\).
Our results for hypergraphs focus on unweighted and undirected hyperedges without self-loops, though the techniques can be easily extended to weighted hyperedges.
At the same time, our results in some cases rely on reductions to weighted and directed graphs.
By default, we use \(uv\) to denote a directed edge from vertex \(u\) to \(v\), and a set of nodes \(\{v_1, \ldots, v_k\}\) to denote a hyperedge.
For a hypergraph, let \(r = \max_{e \in \cE} |e|\) denote its rank.
For a (hyper)graph and a set \(S\), let \(e[S]\) denote the number of (hyper)edges fully contained in \(S\).
\begin{table}
  \caption{Two kinds of degrees we consider.}
  \label{tab:deg}
  \begin{tabularx}{\linewidth}{lll}
  \toprule
         & Normal & Fractional  \\ 
  Degree & \(\deg(v) = \sum_{e \ni v} 1\) &  \(\ndeg(v) = \sum_{e \ni v} \frac{1}{|e|} \) \\
  Volume & \(\vol(S) = \sum_{v \in S}^{} \deg(v)\) & \(\nvol(S) = \sum_{v \in S}^{} \ndeg(v)\) \\
  Maximum& \(\Delta(S) = \max_{v \in S} \deg(v)\) & \(\bar{\Delta}(S) = \max_{v \in S} \ndeg(v)\)  \\
  \bottomrule
\end{tabularx}
\end{table}

Throughout this paper, we mainly consider two kinds of degrees, the definitions of them and their corresponding volumes and maximums are summarized in Table~\ref{tab:deg}.
We have the following connection between degree and fractional degree. 
\begin{lemma}
  \label{lem:deg-connection}
  For a hypergraph, we have
  \[2\ndeg(v)  \leq deg(v) \leq r \ndeg(v), \]
  and as a result,  
  \[2\nvol(S) \leq \vol(S) \leq r \nvol(S).\]
\end{lemma}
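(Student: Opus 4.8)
The plan is to derive the per-vertex inequality first by a term-by-term comparison of the two degree sums, and then obtain the volume statement by summing over $S$. The key observation is that, since the hypergraph has no self-loops, every hyperedge $e$ satisfies $2 \leq |e| \leq r$, so $\frac1r \leq \frac{1}{|e|} \leq \frac12$ for each $e$.

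First I would fix a vertex $v$ and write both $\deg(v) = \sum_{e \ni v} 1$ and $\ndeg(v) = \sum_{e \ni v} \frac{1}{|e|}$ as sums over the same index set $\{e \in \cE : e \ni v\}$. Applying the bound $\frac1r \leq \frac{1}{|e|} \leq \frac12$ inside each summand gives
\[
\frac{1}{r}\deg(v) \;=\; \sum_{e \ni v} \frac1r \;\leq\; \sum_{e \ni v} \frac{1}{|e|} \;=\; \ndeg(v) \;\leq\; \sum_{e \ni v} \frac12 \;=\; \frac12 \deg(v),
\]
and rearranging the two outer inequalities yields $2\ndeg(v) \leq \deg(v) \leq r\,\ndeg(v)$.

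For the volume bound I would simply sum the per-vertex inequality over all $v \in S$, using linearity: $2\nvol(S) = \sum_{v\in S} 2\ndeg(v) \leq \sum_{v \in S}\deg(v) = \vol(S)$, and likewise $\vol(S) = \sum_{v\in S}\deg(v) \leq \sum_{v\in S} r\,\ndeg(v) = r\,\nvol(S)$.

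There is essentially no hard step here; the only point that needs to be stated explicitly (rather than computed) is that the absence of self-loops forces $|e| \geq 2$, which is what makes the factor $2$ on the left valid, and that $|e| \leq r$ by definition of the rank is what gives the factor $r$ on the right.
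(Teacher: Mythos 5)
Your proof is correct and is essentially the same as the paper's: both arguments compare the two sums term by term using $2 \leq |e| \leq r$ (no self-loops and the definition of rank), then sum over $v\in S$ for the volume bound. The only cosmetic difference is that you bound $1/|e|$ and rearrange, whereas the paper multiplies $\ndeg(v)$ by $2$ or $r$ directly; the substance is identical.
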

\begin{proof}
  Because we focus on hypergraphs without self-loops, \(2 \ndeg(v) = \sum_{e \in \cE : e \ni v} \frac{2}{|e|} \leq \sum_{e \in \cE: e \ni v}^{} 1.\)
  Since \(r = \max_{e \in \cE} |e|\), \(r \ndeg(v) = \sum_{e \in \cE: e \ni v}^{} \frac{r}{|e|} \geq \sum_{e \in \cE: e \ni v}^{}1 \).
\end{proof}

To evaluate any set function \(f: 2^V \to \R\) we assume it is available as a value oracle as is standard practice. 
A set function \(f\) is normalized if \(f(\emptyset) = 0\) and nonnegative if \(f(S) \geq 0, \forall S \subseteq V\).
Further, let \(EO(f)\) be the maximum amount of time to evaluate \(f(S)\) for a subset \(S \subseteq V\) and \(M(f)\)
be an upper bound for \(|f(S)|\) for all \(S \subseteq V\).
A set function \(f\) is supermodular if and only if \(f(S) + f(T) \leq f(S \cap T) + f(S \cup T)\) for any \(S, T \subseteq V\),
and accordingly \(f\) is submodular if \(-f\) is supermodular. 
A function is modular if it is both supermodular and submodular.
Note that a normalized, nonnegative and supermodular set function \(f\) is monotone.
Moreover, we call a set function cardinality-based if \(f(S) = f(T),\) for all  $S, T \subset V\) with \(|S| = |T|\),
and asymmetric if there exist $S\subset V$ where  $f(S) \neq f( V \setminus S)\).

\subsection{Graph Cut and Hypergraph Cut}
Densest subgraph discovery(DSG) has a close connection with graph cut problems as the decision version of DSG is solvable by reducing it to a graph min \(\ST\) cut problem~\cite{GolFinding1984}. Here we briefly introduce
some graph cut concepts that will show up in the following discussion. For a weighted directed graph \(G = (V, E, w: E \to \R)\) and 
a set \(S\), the value of its induced cut is \(\textbf{cut}_{G}(S) = \sum_{u \in S, v \in \bar{S}}^{} w( \{u, v\}) \). The graph  min \(\ST\) cut problem is to find the minimal graph cut while enforcing \(s \in S\) and \(t \in \bar{S}\).
In other words, \(\textbf{min-st-cut}_G = \min_{S \subset V: s \in S, t \in \bar{S}} \textbf{cut}_G(S)\). 

The introduction of hyperedges enables a variety of definitions of cut as one hyperedge can be cut in more than one way now. 
Here we adopt a recent generalized notion of a hypergraph cut function~\cite{VBKHypergraph2020,li2017inhomogeneous}. Given a hypergraph \(\cH = (V, \cE)\),
associate each hyperedge \(e\) with a splitting function \(w_e: 2^e \to \R_{\geq 0}\) that maps each subset \(A \subseteq e\) to a nonnegative
splitting penalty. The value \(w_e(A)\) indicates the penalty when \(S \cap e = A\).
Then for one vertex set \(S \subseteq V\), the cut penalty it incurs is \(\textbf{cut}_{\cH}(S)=\sum_{e \in \cE} w_e(S \cap e)\).
The corresponding hypergraph min \(\ST\) cut problem is \(\textbf{min-st-cut}_\cH = \min_{S \subset V: s \in S, t \in \bar{S}} \textbf{cut}_{\cH}(S)\).

\subsection{Related Work}

The classic densest subgraph problem is defined as 
\begin{problem}[Densest Subgraph (DSG)]
  \label{prob:dsg}
  Given a graph \(G = (V, E)\), find a vertex set \(S\) maximizing the fraction \(e[S]/|S|\). \footnote{We always treat \(0/0 = -\infty\) and \(\infty \cdot 0 = 0\).}
\end{problem}
DSG and its variants have received significant attention over the past a few decades.     
They mainly admit two categories of exact solutions, one is flow-based \cite{GolFinding1984} and the other one is based on a linear program (LP)~\cite{ChaGreedy2000}.   
One popular approximation algorithm for DSG and some variants is greedy peeling \cite{ChaGreedy2000}, which runs in linear time and is much faster than exact solutions.    One variant of DSG, called densest subhypergraph (DSHG), is same as Problem~\ref{prob:dsg} except the graph \(G\) is
replaced by a hypergraph \(\cH\) \cite{HKWhen1995,HWCMaintaining2017}.
For a detailed introduction, refer to the recent tutorial \cite{TCSDMTutorial2021} and survey \cite{LMFBSurvey2023}.  

Recently \cite{BGP+Flowless2020} introduces an iterative peeling method for Problem~\ref{prob:dsg} called Greedy++ which shows quick convergence to the optimum. Then \cite{CQTDensest2022} showed that Greedy++ achieves a \(\left(1 - \epsilon\right)\)-approximation
in \(O(1 / \epsilon^2)\) iterations and extends iterative peeling to a broader class of problems called densest supermodular subset (DSS).
\begin{problem}[Densest Supermodular Subset (DSS)~\cite{CQTDensest2022}]
  \label{problem:dss}
  Given a normalized, nonnegative monotone supermodular function \(f: 2^V \to \R_{\geq 0}\), maximize \(f(S) / |S|\). 
\end{problem}
This is an important breakthrough because numerous DSG variants are special cases of Problem~\ref{problem:dss} \cite{GolFinding1984,TsoKclique2015,FLD+Discovering2022,HWCMaintaining2017,VBKGeneralized2021}.
Therefore iterative peeling offers a faster algorithm for them compared with flow and LP.
Moreover, \cite{HQCFaster2022a} proposes an even faster and more scalable iterative algorithm for Problem~\ref{problem:dss} based on
solving the quadratic relaxation of the dual of Charikar's LP.
Although iterative peeling converges fast in practice, it is hard to terminate as soon as some user-defined approximation ratio is achieved as
the optimum is not known in advance.  Recently, \cite{FLD+Discovering2022} tackles this issue for a subclass of Problem \ref{problem:dss}, heavy and dense subgraph problem with nonnegative vertex weights.  

Besides the line of designing faster \emph{global} algorithms for DSG and its variants, there is some recent interest in studying
seeded variants of DSG where a seed set \(R\) is given and the objective is to find a densest subgraph around this seed set \cite{SGCommunitysearch2010, DQCAnchored2022, FLD+Discovering2022}. Of these, only~\cite{DQCAnchored2022} provides an objective that gives a strongly local algorithm, meaning that the optimal answer is found only by exploring a small portion of the whole graph, via the objective:
\begin{problem}[Anchored Densest Subgraph (ADS)]
  \label{prob:ads}
  Given a graph \(G = (V, E)\) and a seed set \(R \subset V\), find a vertex set \(S\) maximizing \(\left(2e[S] - \vol(S \cap \bar{R})  \right)/|S|\).
\end{problem}
In Problem~\ref{prob:ads}, the bias towards the seed set is encoded by adding penalties onto vertices outside the seed set. Here for simplicity we omit
the strict anchor set \(A\) defined in the ADS problem~\cite{DQCAnchored2022}, but all our results and analysis go through with this additional constraint. 

\section{General Dense Supermodular Subset} 
The function \(2e[S] - \vol(S \cap \bar{R})\) in Problem~\ref{prob:ads} is a normalized supermodular function as \(e[S]\) is supermodular and
\(\vol(S \cap \bar{R})\) is modular. But it is not a special case of Problem~\ref{problem:dss} as this function is not guaranteed to be nonnegative. 
This inspires our broader class:
\begin{problem}[Densest Supermodular Subset with Possible Negative Values]
  \label{prob:dssnv}
  Given a normalized supermodular function \(f: 2^V \to \R\), maximize  \(f(S) / |S|\).   
\end{problem}
In addition to the anchored densest subgraph objective mentioned above,
this new formulation also generalizes the objective \(\max_{S \subset V} \left(e[S] - \alpha \textbf{cut}_G(S) \right) / |S|\) considered in \cite{MKFinding2018}
because\(e[S]\) is supermodular and the cut function \(e(S, \bar{S})\) is a well-known submodular function.
We prove the following connection between this extension and the class DSS. 
\begin{theorem} \label{thm:general-to-nonnegative}
  For any normalized supermodular function \(f: 2^V \to \R\),
  one can construct a normalized, nonnegative monotone supermodular function \(g: 2^V \to \R_{\geq 0}\) such that
  \[\underset{S \subset V}{\argmax} {f(S)}/{|S|} = \underset{S \subset V}{\argmax} {g(S)}/{|S|}\]
  and $g$ can be constructed in \(O(|V| EO(f))\) time.
\end{theorem}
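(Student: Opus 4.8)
The plan is to build $g$ from $f$ by a \emph{uniform} modular shift. I would define $g(S) = f(S) + c\,|S|$ for a constant $c \ge 0$ to be fixed below. Since $S \mapsto c|S|$ is modular, $g$ is supermodular as soon as $f$ is, and $g(\emptyset) = f(\emptyset) = 0$, so $g$ is normalized. Crucially, $g(S)/|S| = f(S)/|S| + c$ for every nonempty $S$, and since $f(\emptyset)/|\emptyset| = -\infty$ by convention the empty set maximizes neither ratio; hence $\underset{S\subset V}{\argmax}\, f(S)/|S| = \underset{S\subset V}{\argmax}\, g(S)/|S|$ no matter what $c$ is. What remains is to choose $c$ so that $g$ is nonnegative, because a normalized, nonnegative, supermodular function is automatically monotone (as noted in the preliminaries), which would give every property demanded of $g$.

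The technical heart is a lower bound on a normalized supermodular function by its singleton values: for all $S \subseteq V$,
\[
  f(S)\ \ge\ \sum_{v \in S} f(\{v\}).
\]
I would prove this by enumerating $S = \{v_1,\dots,v_k\}$ in an arbitrary order, telescoping $f(S) = \sum_{i=1}^{k}\bigl(f(\{v_1,\dots,v_i\}) - f(\{v_1,\dots,v_{i-1}\})\bigr)$, and lower-bounding each marginal gain by $f(\{v_i\}) - f(\emptyset) = f(\{v_i\})$ via the characterization of supermodularity as non-decreasing marginal gains, applied to $\emptyset \subseteq \{v_1,\dots,v_{i-1}\}$.

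With this in hand, I would set $c = \max\bigl\{\,0,\ -\min_{v \in V} f(\{v\})\,\bigr\}$. Computing $c$ takes $|V|$ oracle evaluations of $f$ on singletons plus $O(|V|)$ comparisons, \ie\ $O(|V|\, EO(f))$ time, and $g$ is then supplied as the oracle $S \mapsto f(S) + c|S|$. For any $S$,
\[
  g(S)\ =\ f(S) + c|S|\ \ge\ \sum_{v \in S}\bigl(f(\{v\}) + c\bigr)\ \ge\ \sum_{v \in S}\max\{0,\, f(\{v\})\}\ \ge\ 0,
\]
where the first inequality is the singleton bound and the second uses $c \ge -f(\{v\})$ for every $v$. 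Thus $g$ is normalized, nonnegative, supermodular, hence monotone, and has the same ratio-maximizers as $f$, which would complete the argument.

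The only real obstacle is the singleton lower bound $f(S) \ge \sum_{v\in S} f(\{v\})$; the rest is bookkeeping. I would also highlight the one subtlety that forces a \emph{uniform} shift: a set-dependent shift such as $\sum_{v\in S}\max\{0,\,-f(\{v\})\}$ would also restore nonnegativity, but it would perturb $f(S)/|S|$ by a quantity depending on $S$ and so could move the maximizer; only a constant shift $c|S|$ changes every ratio by the same amount and thereby preserves the set of maximizers.
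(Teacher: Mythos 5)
Your proof is correct and takes essentially the same approach as the paper: both define $g(S) = f(S) + C|S|$ with $C = \max\{0, \max_{v}(-f(\{v\}))\}$, establish nonnegativity via the telescoping singleton lower bound $f(S) \ge \sum_{v\in S} f(\{v\})$, and observe that the uniform modular shift preserves the ratio maximizers while costing $O(|V|\,EO(f))$ oracle calls. Your closing remark on why the shift must be uniform (a set-dependent shift could move the maximizer) is a nice clarifying addition not made explicit in the paper.
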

\begin{proof}
  Let \(C = \max \{0, \max_{v \in V} -f(\{v\})\}\),
  in other words \(C\) is the smallest nonnegative quantity such that \(C + f(\{v\}) \geq 0, \forall v \in V\). 
  Then we construct \(g\) as \(g(S) \coloneqq f(S) + C |S|, \forall S \subset V\).
  Since \(C|S|\) is modular, \(g\) is still supermodular.
  Observe that because \(f\) is supermodular, for any set \(S = \{v_1, v_2, \ldots, v_{|S|}\} \subset V\),
  we have \(g(S) = f(S) + C |S| \geq f(S \setminus \{v_1\}) + f(\{v_1\}) + C|S| \geq \ldots \geq \sum_{i=1}^{|S|} f(\{v_i\}) + C|S| \geq 0\)
  which means that \(g(S)\) is nonnegative and implies that \(g(S)\) is monotone.
  Thus,
  \[\max_{S \subset V} {f(S)}/{|S|} + C = \max_{S \subset V} {g(S)}/{|S|}. \]
  And \(C\) can be computed by querying \(f(\{v\})\) for each \(v \in V\), which can be done in \(O(|V| EO(f))\) time.
\end{proof}
This theorem implies that any exact algorithm for DSS will remain as an exact algorithm for the extended Problem~\ref{prob:dssnv},
such as using linear programming or combining binary search with repeated submodular minimization.

While extending the definition to non-negative valued functions may seem a minor change as it is easy to adapt exact algorithms, this change has large implications for approximation algorithms. For instance, the efficient greedy peeling fails to hold a constant approximation ratio. In the Appendix \ref{sec:counter-greedy-peeling}, we show one example where greedy peeling may perform arbitrarily badly. Even for the recent iterative peeling approach~\cite{CQTDensest2022}, the picture is more complex and the bounds are not straightforward.
\cutforspace{First, this $(1-\eps)$ approximation ratio does not hold anymore due to the shift. That said, we still can get some kind of mixed additive and multiplicative approximation guarantees. When the problem instance has a relatively large optimum compared with the shift \(C\), this shift becomes negligible and we can still obtain some type of multiplicative approximation guarantee. On the other hand, when the optimum is relatively small, then the multiplicative error with regard to \(C\) will dominate, then we can get some additive error guarantees.}{} That said, we hypothesize that iterative peeling remains an effective practical heuristic.

\section{A Strongly Polynomial Algorithm} 
\label{sec:strongly-poly}
Strongly polynomial algorithms are those having a running time bounded by a polynomial of the number of input numbers instead of their size.
In the context of Problem~\ref{prob:dssnv}, a strongly polynomial algorithm is one whose runtime is dependent on
\(|V|, EO(f)\) but independent of \(M(f)\).

As mentioned before, two common exact solutions for Problem~\ref{prob:dssnv} are linear programming or combining binary search with submodular minimization.
However, neither of those two algorithms are strongly polynomial. In particular, there is no strongly-polynomial time solution for linear programming.
Meanwhile, given a problem instance of Problem~\ref{prob:dssnv}, 
one can binary search the optimum and answer the decision problem that
given a parameter \(\beta\), decide whether there exists one set \(S\) with \(f(S) / |S| > \beta\).
However, the range to perform binary search and the termination condition both depend on \(M(f)\).
For example, for the simplest case that \(f\) is nonnegative and integral, we have \(M(f) = f(V) = \max_{S \subset V} f(S)\). Thus the optimum falls into the range \([0, f(V)]\) and
for any two \(S, T \subset V\) with \(f(S) / |S|\) and \(f(T) / |T|\) different, the minimum gap between \(f(S) / |S|\) and \(f(T) / |T|\) is \(1 / |V|^2\). 
This means the binary search takes \(O(\log(M(f) |V|^2))\) iterations.  
Hence it has a dependence on \(M(f)\) and is not strongly polynomial.\footnote{This statement holds for a general function $f$. For specific $f$, we may have that $M(f)$ is a simple function of $|V|$ or $|E|$ and binary search would be strongly polynomial.}

Inspired by Dinklebach's algorithm~\cite{DinNonlinear1967}, we give a simple strongly polynomial algorithm framework for a general normalized supermodular function $f$ in Algorithm~\ref{alg:density-improve}. Each iteration minimizes a submodular function in strongly polynomial time \cite{IFFCombinatorial2001,SchCombinatorial2000,OrlFaster2009,LSWFaster2015}. And thus, we call it a density improvement framework as
in each iteration the answer gets improved.  
More importantly, we will also show in numerical experiments that this in fact can be much faster than alternatives based on binary search commonly used in the literature. 

\begin{algorithm}[t]
\begin{algorithmic}[1]
\Require A normalized supermodular function \(f: 2^V \to \R\) \cutforspace{given by value oracle.}{via oracle}
\Ensure \(S^*\) maximizing \(f(S) / |S|\). 
\State \(S_0 \gets V, t \gets 0\)
\Repeat
\State \(t \gets t + 1, \beta_t \gets d(S_{t-1})\) 
\State Find an \( S_t \) maximizing  \( \beta_t |S| - f(S)\)
\Until{\(\beta_t |S_t| - f(S_t) = 0\)} \\
\Return{\(S_{t-1}\)}
\end{algorithmic}
\caption{Density Improvement Framework for Problem~\ref{prob:dssnv}}
\label{alg:density-improve}
\end{algorithm}
The following standard result shows that the termination condition certifies the optimality of the solution.
\begin{lemma}
  \label{lem:equiv-decision}
  For a normalized, supermodular function \(f: 2^V \to \R\), and a given parameter \(\beta\),
  \(\min_{S \subset V} \beta |S| - f(S) < 0\) if and only if 
  there exists a set \(S\) such that \(f(S) / |S| > \beta\). As a result, \(\min_{S \subset V} \beta |S| - f(S) = 0\) if and only if
  \(\max_{S \subset V} f(S) / |S| \leq \beta\).
\end{lemma}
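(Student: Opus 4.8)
The plan is to prove the biconditional directly by unwinding definitions. Supermodularity is not actually needed for this lemma (it matters only downstream, to guarantee that $\beta|S| - f(S)$ is submodular and hence minimizable in strongly polynomial time); the argument here is purely arithmetic, and the only care required is with the empty set and the conventions $0/0 = -\infty$, $\infty \cdot 0 = 0$. The first thing I would record is that the empty set always yields $\beta|\emptyset| - f(\emptyset) = 0$ since $f$ is normalized, so $\min_{S \subseteq V}(\beta|S| - f(S)) \le 0$ always holds; this is what makes the ``as a result'' statement a genuine dichotomy between ``negative'' and ``exactly zero''.

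For the forward direction of the biconditional, suppose $\min_{S}(\beta|S| - f(S)) < 0$ and let $S$ attain a negative value. Since $S = \emptyset$ gives value $0$, we must have $S \neq \emptyset$, so $|S| \ge 1$ and dividing $\beta|S| - f(S) < 0$ by $|S|$ gives $f(S)/|S| > \beta$. Conversely, if some $S$ satisfies $f(S)/|S| > \beta$, then by the convention $0/0 = -\infty$ this $S$ is nonempty, so $f(S) > \beta|S|$, i.e.\ $\beta|S| - f(S) < 0$, and hence the minimum over all subsets is negative.

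The ``as a result'' clause then follows by negation: because the minimum is always at most $0$, we have $\min_{S}(\beta|S| - f(S)) = 0$ if and only if it is not strictly negative, which by the biconditional just established is equivalent to there being no set $S$ with $f(S)/|S| > \beta$, i.e.\ to $\max_{S \subseteq V} f(S)/|S| \le \beta$ (the maximum effectively taken over nonempty $S$, since the $0/0 = -\infty$ convention makes $\emptyset$ irrelevant). There is no real obstacle in this proof; the single point demanding attention is to consistently rule out $S = \emptyset$ using the normalization of $f$ together with the stated conventions.
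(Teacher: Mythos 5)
Your proof is correct and takes essentially the same approach as the paper's: observe that $\emptyset$ gives value $0$ so the minimum is never positive, use normalization to rule out $\emptyset$ on the "min $<0$" side, and obtain the "as a result" clause by taking complements of the equivalent strict inequalities. Your side remark that supermodularity plays no role in this particular lemma is also accurate.
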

\begin{proof}
  Since \(f\) is normalized, if \(\beta|S| - f(S) < 0\), then \(S \neq \emptyset\). Therefore any \(S\) satisfying \(\beta |S| - f(S) < 0\)
  has \(f(S) / |S| > \beta\). Meanwhile any \(S\) with \(f(S) / |S| > \beta\) naturally has \(\beta |S| - f(S) < 0\). 

  Further, notice that \(\beta |\emptyset| - f(\emptyset) = 0\), hence \(\min_{S \subset V} \beta |S| - f(S) \leq 0\) always holds.
  As \(\min_{S \subset V} \beta |S| - f(S) < 0\) and \(\max_{S \subseteq V} f(S) / |S| > \beta\) are equivalent, the complement of them
  are also equivalent.
\end{proof}

Suppose Algorithm \ref{alg:density-improve} runs for \(T\) iterations. By Lemma~\ref{lem:equiv-decision}, \(\beta_T |S_T| - f(S_T) = 0\)
suggests that
\[\max_{S \subset V} f(S) / |S| \leq \beta_T = f(S_{T-1}) / |S_{T-1}|, \] 
which means \(S_{T-1}\) is optimal.
We make one important observation that the size of \(S_t\) is strictly decreasing. This is intuitive since we can view \(\beta |S|\) as an \(\ell_1\)-norm penalty on \(|S|\), thus
with penalty coefficient \(\beta\) increasing, the size of the solution to \(\min_{S \subseteq V} \beta |S| - f(S)\) tends to decrease.
By our algorithm design, \(\forall t < T,\) 
\begin{align}
  \label{eq:6}
  \beta_t |S_t| < f(S_t), 
\end{align}
because we terminate the algorithm once at a point we get \(\beta_t |S_t| = f(S_t)\). 
As \(f\) is normalized, for \(t < T\), \(S_t\) is non-empty. Hence Equation \eqref{eq:6} implies \(\forall t < T,\) 
\begin{align}
  \label{eq:8}
  \beta_t < f(S_t) / |S_t| = \beta_{t+1}, 
\end{align}
where the equality follows from the definition of \(\beta_{t}\).
This means the sequence of \(\beta\) is strictly increasing.
By our algorithm design, for \(\forall t > 0, S_t\) is the minimizer of \(\beta_{t} |S| - f(S)\),
hence we have \(\forall t > 0\)
\begin{align}
  \label{eq:7}
\beta_t |S_t| - f(S_t) = \min_{S \subset V} \beta_t |S| - f(S) \leq \beta_t |S_{t+1}| - f(S_{t+1}). 
\end{align}
 Observe that via Equation~\eqref{eq:8}, we have \(\forall t < T - 1\),
 \begin{align*}
   f(S_{t}) / |S_{t}| = \beta_{t+1} < \beta_{t+2} = f(S_{t+1}) / |S_{t+1}|.
 \end{align*}
 which further shows that \(\forall t < T - 1\),
 \begin{align}
   \label{eq:9}
   \beta_{t+1} |S_{t+1}| - f(S_{t+1}) < 0 = \beta_{t+1} |S_{t}| - f(S_{t}). 
 \end{align}
Combining Equations~\eqref{eq:7} and \eqref{eq:9}, we get \(\forall 1 \leq t < T - 1\)
 \[\beta_t (|S_t| - |S_{t+1}|) \leq f(S_t) - f(S_{t+1}) < \beta_{t+1}(|S_t| - |S_{t+1}|). \]  
 As \(\beta_t < \beta_{t+1}\) for \(\forall t < T\), we get for all \(1 \leq t < T - 1\),
 \[|S_t| > |S_{t+1}|. \]
 Notice that if \(T > 1\), then \(d(S_1) = \beta_2 > \beta_1 = d(S_0) = d(V) \) where the last equality is because of our choice of \(S_0\).
 Thus \(S_1 \neq V\) and \(|S_1| < |V| = |S_0|\).

 Based on the discussion above, we have the following result. 
\begin{theorem} \label{thm:strongly-polynomial}
  Assume the density improvement procedure terminates after \(T\) iterations, then we have 
  \begin{itemize}
  \item \(\forall t < T, \beta_{t} < \beta_{t+1}\). 
   \item \(\forall t < T - 1, |S_{t}| > |S_{t+1}|\).
  \end{itemize}
  As a result, this procedure will terminate after at most \(|S_0| + 1 = |V| + 1\) iterations and the algorithm runs in strongly polynomial time.
\end{theorem}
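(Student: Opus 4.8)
The plan is to collect the chain of inequalities already built up in the discussion immediately preceding the statement and then turn the strict decrease of the set sizes into the iteration count, so the proof is mostly a matter of organizing the two bullets and reading off the consequence.

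For the first bullet ($\beta_t < \beta_{t+1}$ for $t < T$) I would observe that non-termination at step $t$ means $\beta_t|S_t| < f(S_t)$; normalization of $f$ forces $S_t \neq \emptyset$, so dividing by $|S_t|$ and using $\beta_{t+1} = d(S_t)$ gives the strict inequality — this is exactly Equation~\eqref{eq:8}. For the second bullet ($|S_t| > |S_{t+1}|$ for $t < T-1$) I would treat the generic range $1 \le t < T-1$ and the base case $t = 0$ separately. In the generic range, optimality of $S_t$ for the $\beta_t$-penalized objective (Equation~\eqref{eq:7}) together with the fact that at level $\beta_{t+1}$ the set $S_{t+1}$ is strictly negative while $S_t$ is exactly zero (Equation~\eqref{eq:9}) sandwiches $f(S_t) - f(S_{t+1})$ strictly between $\beta_t(|S_t|-|S_{t+1}|)$ and $\beta_{t+1}(|S_t|-|S_{t+1}|)$; since $\beta_t < \beta_{t+1}$, the common factor $|S_t| - |S_{t+1}|$ is forced to be positive. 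For $t = 0$ I would argue directly: $S_0 = V$, and if $T > 1$ then $d(S_1) = \beta_2 > \beta_1 = d(V)$ rules out $S_1 = V$, hence $|S_1| < |S_0|$.

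Putting the two bullets together yields a strictly decreasing chain of nonnegative integers $|V| = |S_0| > |S_1| > \cdots > |S_{T-1}| \ge 0$, i.e.\ a strictly decreasing sequence of $T$ elements inside a set of size $|V|+1$, which immediately gives $T \le |V|+1$. For the running-time claim, each iteration performs a single minimization of $S \mapsto \beta_t|S| - f(S)$, which is submodular (since $-f$ is submodular and $\beta_t|S|$ is modular) and hence solvable in strongly polynomial time by known submodular-minimization routines; the only per-iteration overhead beyond this is one oracle call to compute the next density. Thus the total cost is at most $|V|+1$ submodular minimizations plus $O(|V|\,EO(f))$ bookkeeping, so the algorithm is strongly polynomial.

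I do not expect a genuine obstacle here — this is essentially careful bookkeeping — but two points warrant attention. The sandwich argument for the size decrease relies on Equation~\eqref{eq:7}, which holds only for $t > 0$, so the $t = 0$ step genuinely needs the separate $S_1 \neq V$ argument rather than the uniform calculation; and one must repeatedly invoke normalization of $f$ to guarantee that every intermediate set $S_t$ with $t < T$ is nonempty, so that the densities $d(S_t)$ are well defined and the $\beta$-sequence is strictly increasing rather than merely nondecreasing.
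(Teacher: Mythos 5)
Your proposal reproduces the paper's argument essentially verbatim: the first bullet follows from Equation~\eqref{eq:8} after using normalization to guarantee $S_t\neq\emptyset$; the second bullet uses the sandwich obtained from Equations~\eqref{eq:7} and~\eqref{eq:9} for $1\le t<T-1$ together with the separate $t=0$ observation that $d(S_1)>d(V)$ forces $S_1\neq V$; and the iteration bound and strong polynomiality are read off exactly as the paper does, via the strictly decreasing chain of set sizes and the fact that each inner minimization is a submodular-function minimization. The two caveats you flag (Equation~\eqref{eq:7} requiring $t>0$, normalization to ensure nonempty $S_t$) are precisely the points the paper handles, so there is no gap and no divergence in approach.
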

This conclusion shows that Algorithm~\ref{alg:density-improve} will iteratively decrease the size of the solution.
The supermodularity of \(f\) ensures that Line 4 of Algorithm~\ref{alg:density-improve} can be done in strongly polynomial time. As a result, the whole procedure is strongly polynomial.
We can see that in each iteration, when minimizing \(\beta_t |S| - f(S)\), the minimization algorithm
does not matter much as long as it is strongly polynomial.

Also here for simplicity of analysis, we take \(S_0 = V\),
but we can always start from some better initial sets and there is some potential to reuse information from previous solutions, which is sometimes more useful than solving
the whole problem from scratch.
The bound on the number of iterations is also rather loose, in other words, we believe in practice the number of iterations may be \(o(|V|)\). 
Moreover, the supermodularity of \(f\) is not necessary as long as \(f\) has some special properties which enable a strong polynomial algorithm for minimizing \(\beta |S| - f(S)\).

\section{Anchored Densest Subhypergraph}
We now turn to concrete special cases of Problem~\ref{prob:dssnv} that focus on returning a dense subhypergraphs that are localized around a given seed set in a hypergraph. 
\begin{problem}[Anchored Densest Subhypergraph (ADSH)]  
  \label{prob:adsh}
  Given a hypergraph \(\cH = (V, \cE)\), a locality parameter \(\epsilon \geq 0\) and a seed set \(R \subset V\), find a vertex set \(S\) maximizing 
\(d(S) = (e[S] - \epsilon \vol(S \cap \bar{R})/2)/|S|\),
where \(e[S]\) is the number of hyperedges fully contained in \(S\) and \(\bar{R} = V \setminus R \) is the complement set of \(R\).
\end{problem}
\begin{problem}[Anchored Densest Subhypergraph with Fractional Volume (ADSH-F)]
  \label{prob:adshf}
  Given a hypergraph \(\cH = (V, \cE)\), a locality parameter \(\epsilon \geq 0\) and a seed set \(R \subset V\), find a vertex set \(S\) maximizing 
\(\bar{d}(S) = (e[S] - \epsilon \nvol(S \cap \bar{R}))/|S|\).
\end{problem}
\cutforspace{These problems are inspired by the Anchored Densest Subgraph problem (Problem~\ref{prob:ads}) and generalize it two ways. First, they apply to hypergraphs rather than just graphs. Second, they include a new locality parameter $\varepsilon$ that allows a user to tune the extent to which the output set is localized around a seed set $R$. These problems differ from each other only in that Problem~\ref{prob:adsh} uses the standard notion of hypergraph volume and Problem~\ref{prob:adshf} uses fractional volume. Problem~\ref{prob:adsh} is a more direct and simple generalization of ADS, and therefore our theoretical results focus slightly more on this case.
 For this objective, every edge included in \(S\) increases the quantity \(e[S]\) by 1, but may increase the penalty term \(\vol(S \cap \bar{R})\) by much more than that. Problem~\ref{prob:adshf} therefore offers a useful alternative approach in which each hyperedge will only incur a penalty proportional to the \emph{fraction} of its vertices inside \(S \cap \bar{R}\). This avoids situations where large hyperedges incur extremely large penalties.
}{
These problems are inspired by ADS (Problem~\ref{prob:ads}) and generalize it by applying to hypergraphs and including a locality parameter. They use the two different types of hypergraph volume and ADSH-F avoids situations where large hyperedges incur extremely large penalties. We focus results on ADSH in the interest of space.
}

\subsection{A Flow-Based Exact Algorithm}
\label{sec:flow-based-exact}
We first introduce a flow-based exact algorithm that applies to the following problem that generalizes Problems~\ref{prob:adsh} and~\ref{prob:adshf}:
\begin{align}
  \label{eq:1}
  \max \left(e[S] - p(S) \right) / |S|, \quad p: V \to \R_{\geq 0}.
\end{align}
Throughout the paper, for any vector \(q \in R^V\) and vertex set \(S \subset V\), let \(q(S) = \sum_{v \in S}^{} q(v)\) denote the summation of entries of \(q\) indexed by \(S\).    
We show how to solve this by reducing it to a sequence of generalized hypergraph $s$-$t$ cut problems, which can be solved in turn via reduction to graph $s$-$t$ cut problems using existing techniques~\cite{VBKHypergraph2020}.

Consider the decision version of Eq. \eqref{eq:1}. 
For a parameter \(\beta\), there exists an \(S\) such that \(\left(e[S] - p(S) \right) / |S| > \beta\) if and only if 
there exists an \(S\) such that \(p(S) + \beta |S| - e[S] < 0\)\cutforspace{, since \(e[S] - p(S)\) is normalized and we treat \( 0 / 0 = -\infty\).}{.}  
We have that \(e[S] = \nvol(S) - \sum_{e \in \cE} g_e(S)\)
where \(g_e(S) = \min \frac{1}{|e|} \{|e \cap S|, \infty |e \setminus S|\}\), as
\begin{align}
  \label{eq:2}
  e[S] &= \sum_{e \in \cE} \ind{e \subset S} = \sum_{e \in \cE} \left(   \frac{|e \cap  S|}{|e|} - g_e(S)  \right) \\
   &= \sum_{v \in S} \sum_{e \ni v} \frac{1}{|e|} - \sum_{e \in \cE} g_e(S) 
   = \nvol(S) - \sum_{e \in \cE} g_e(S). \nonumber 
\end{align}
We can therefore verify whether there exists an \(S\) such that \(p(S) + \beta |S| - e[S] < 0\) by solving a hypergraph min \(\ST\) cut problem on the
 extended hypergraph \(\cH_\beta\) constructed as follows:
 \begin{itemize}
 \item Keep all of \(\cH = (V, \mathcal{E})\) and for each hyperegdge \(e\), assign one splitting function \(g_e(S) = \min \frac{1}{|e|} \{|e \cap S|, \infty |e \setminus S|\}\).
 \item Introduce one super source \(s\) and create one edge \(\{s, v\}\) with weight \(\ndeg(v)\) for each \(v \in V\).   
 \item Introduce one super sink \(t\) and create one edge \(\{v, t\}\) with weight \(\beta + p(v)\) for each \(v \in V\). 
 \end{itemize}
We focus here on the case where \(\beta \geq 0\) since the optimal solutions to Problems~\ref{prob:adsh} and~\ref{prob:adshf} are always nonnegative. Note however that we can also handle $\beta < 0$ using slight adjustments to the construction above. We refer to edges directed connected to \(s\) or \(t\) \emph{terminal} edges, denoted by \(\cE^{st}\). Their splitting function is the same as the cut function for a standard graph: the penalty is 0 if the edge is not cut and otherwise is equal to the weight of the edge. Every \(S \subset V\) induces a hypergraph \(\ST\) cut on \(\cH_\beta\) with value 
\begin{align}
   \label{eq:3}
\textbf{cut}(S \cup \{s\}) &= \sum_{v \in \bar{S}} \ndeg(v) + \sum_{v \in S} \left( p(v) + \beta \right) + \sum_{e \in \cE}^{} g_e(S)  \\
  &= \nvol(\bar{S}) + \beta |S| + p(S) + \sum_{e \in \cE} g_e(S)  \nonumber \\
  &= \nvol(\cH) - e[S] + \beta |S| + p(S). \nonumber 
 \end{align}
 where the last equality is due to Eq. \eqref{eq:2}.
We summarize as: 
\begin{observation}
The minimum $s$-$t$ cut of $\mathcal{H}_\beta$ is strictly smaller than \(\nvol(\cH)\) if and only if there exists \(S\) with \(\left(e[S] - p(S) \right) /|S| > \beta\).
\end{observation}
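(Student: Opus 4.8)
The statement is essentially a corollary of the cut-value identity in Eq.~\eqref{eq:3}, so the plan is to turn that identity into an equivalence. First I would observe that every subset $S \subset V$ corresponds bijectively to an $s$-$t$ cut $S \cup \{s\}$ of $\cH_\beta$ (with $s$ on the source side and $t$ on the sink side), and conversely every such cut arises this way. By Eq.~\eqref{eq:3}, the value of this cut equals $\nvol(\cH) - e[S] + \beta|S| + p(S)$. Therefore $\textbf{cut}(S \cup \{s\}) < \nvol(\cH)$ holds exactly when $-e[S] + \beta|S| + p(S) < 0$, i.e.\ when $p(S) + \beta|S| - e[S] < 0$.

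Next I would connect this back to the decision version of Eq.~\eqref{eq:1}: as already noted in the excerpt, for $\beta \geq 0$ there exists $S$ with $(e[S] - p(S))/|S| > \beta$ if and only if there exists $S$ with $p(S) + \beta|S| - e[S] < 0$ (using normalization of $e[S] - p(S)$ and the convention $0/0 = -\infty$, so that $S = \emptyset$ never witnesses the strict inequality). Combining the two equivalences, the existence of a subset $S$ with $(e[S] - p(S))/|S| > \beta$ is equivalent to the existence of an $s$-$t$ cut of $\cH_\beta$ of value strictly less than $\nvol(\cH)$, which is exactly the statement that $\textbf{min-st-cut}_{\cH_\beta} < \nvol(\cH)$.

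The only point requiring a small amount of care — and the main (mild) obstacle — is checking that minimum $s$-$t$ cuts of $\cH_\beta$ never put a vertex $v \in V$ on the sink side unnecessarily in a way that breaks the correspondence, and that the $g_e$ splitting functions (which assign an infinite penalty to any hyperedge with $e \cap S \neq \emptyset$ but $e \not\subset S$) correctly enforce $e[S] = \nvol(S) - \sum_e g_e(S)$; both are handled by Eq.~\eqref{eq:2}, which already establishes the needed identity for all $S$, and by noting that any cut of finite value automatically keeps each hyperedge either fully inside or fully outside $S$. Since $\nvol(\cH)$ itself is achieved by the trivial cut $\{s\}$ (take $S = \emptyset$: the cut value is $\sum_{v} \ndeg(v) = \nvol(\cH)$), the minimum is always at most $\nvol(\cH)$, so ``strictly smaller'' is the right comparison. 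Assembling these observations gives the claimed equivalence.
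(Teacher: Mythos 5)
Your argument is correct and follows essentially the same route the paper implicitly uses: Eq.~\eqref{eq:3} already gives $\cut(S \cup \{s\}) = \nvol(\cH) - e[S] + \beta|S| + p(S)$ for every $S \subseteq V$, the trivial cut $S = \emptyset$ has value exactly $\nvol(\cH)$, and the equivalence with $p(S) + \beta|S| - e[S] < 0$ (hence with the decision version of Eq.~\eqref{eq:1}) was noted just above the observation. One side remark, however, is wrong and worth correcting: $g_e$ does \emph{not} assign an infinite penalty to partially-cut hyperedges, and it is not true that every finite-value cut keeps each hyperedge fully on one side. With $g_e(S) = \frac{1}{|e|}\min\{|e\cap S|,\, \infty|e\setminus S|\}$ and the paper's convention $\infty \cdot 0 = 0$, a hyperedge with $e\cap S \neq \emptyset$ and $e \not\subset S$ incurs the \emph{finite} penalty $|e\cap S|/|e|$; the $\infty$-weight edges appear only inside the gadget that replaces $e$ in the reduced graph $G_\cH$, and a minimum $s$-$t$ cut there simply places the auxiliary node $v_e$ so as to avoid them. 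Fortunately your proof does not actually depend on this mischaracterization — you correctly fall back on Eq.~\eqref{eq:2}, which establishes $e[S] = \nvol(S) - \sum_e g_e(S)$ purely algebraically for every $S$ — so the argument stands, but the parenthetical should be removed or fixed.
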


For each $e \in \mathcal{E}$, the splitting function $g_e$ is submodular, cardinality-based and \emph{asymmetric}. Under these conditions, Veldt et al. has shown how to reduce a generalized hypergraph $s$-$t$ cut problem to a graph $s$-$t$ cut problem \cite{VBKHypergraph2020}. We include details here for completeness. For this reduction, no change needs to be made to terminal edges, since by construction they already involve only two nodes. As shown in \cite{HWCMaintaining2017, VBKHypergraph2020}, each $e \in \mathcal{E}$ can be replaced by the following \emph{gadget}
 \begin{itemize}
 \item Introduce one auxiliary node \(v_e\). 
 \item For each \(v \in e\), introduce a directed edge from \(v\) to \(v_e\) with weight \(\frac{1}{|e|}\),
   and a directed edge from \(v_e\) to \(v\) with weight \(\infty\). 
 \end{itemize}
This leads to a new directed graph \(G_\cH\) on an augmented node set. For any \(S \subset V\), if we include \(v_e\) on the same side of \(S\), then we incur a directed cut penalty of \(\infty |e \setminus S| / |e|\), otherwise the incurred directed cut penalty is \(|e \cap S|/|e|\). The minimum $s$-$t$ cut solution in \(G_\cH\) will naturally place the auxiliary node $v_e$ in a way that incurs the minimum possible penalty subject to the placement of the original node set $V$. Therefore, for a node set $S \subseteq V$, the penalty incurred because of nodes in hyperedge $e$ is exactly \(g_e(S)\).

 With this core algorithmic step, what is left is to determine what \(\beta\)s to test. The density improvement framework introduced in Section~\ref{sec:strongly-poly} applies here and provides a strongly polynomial algorithm. One could also use binary search, as done in many densest subgraph variants. Observe that the answer falls in the interval \([-p(V), |\cE|]\). When \(p\) is integral or rational, there exists some pre-determined smallest gap between any two possible non-equal values of \(\left(e[S] - p(S) \right) /|S|\), and we can determine the termination condition accordingly. When \(p\) is irrational, although this strategy fails, we can still apply parametric flow to solve it, as in~\cite{GolFinding1984}.

\textbf{New results for DSG in vertex-weighted graphs.}
We note in passing that our approach for solving Problem~\ref{eq:1} implies more general results for solving densest subgraph problems in vertex-weighted graphs. The following problem was introduced in~\cite{GolFinding1984} and later considered in~\cite{FLD+Discovering2022}.
\begin{problem}[Heavy and Dense Subgraph Problem (HDSP)]
  Given an undirected graph \((G, V, E, w_V, w_E)\) without self-loops, where \(w_V: V \to \R_{\geq 0}\) and \(w_E: E \to \R_{\geq 0}\), find
  \(S^* \subset V\) such that 
  \[S^* = \underset{S \subset V}{\argmax} \frac{e[S] + \sum_{v \in S} w_V(v)}{|S|},\]
  \cutforspace{where \(e[S] = \sum_{e \in E(S)} w_E(e)\), in other words the sum of the weights of edges fully contained in \(S\).}
  {where \(e[S]\) is the sum of the weights of edges fully contained in \(S\).}
\end{problem}
This problem explicitly considers weighted edges. Note that our approach for solving Objective~\ref{eq:1} can easily be extended to weighted settings as well by scaling hyperedges (and the resulting edges in the reduced graph). While HDSP focuses only on standard graphs, our approach applies more generally to hypergraphs. Furthermore, while HDSP focuses on nonnegative vertex weights, our approach effectively deals with nonpositive vertex weights. 
Combining our techniques with Goldberg's flow network for HDSP~\cite{GolFinding1984} leads to the following stronger result.
\begin{observation}
  There is an efficient flow-based exact algorithm for any problem of the form
  \(\max_{S \subset V} \left(e[S] + p(S) \right) / |S|\), where \(p: V \to \R\) is a vertex function with no sign constraint.
\end{observation}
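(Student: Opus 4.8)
The plan is to reduce to the building blocks already established by splitting $p$ into its nonnegative and nonpositive parts. Write $p = p^+ - p^-$ with $p^+(v) = \max\{0,p(v)\}$ and $p^-(v) = \max\{0,-p(v)\}$, so the objective becomes $(e[S] + p^+(S) - p^-(S))/|S|$. The $-p^-(S)$ term is exactly a nonnegative vertex penalty of the form handled by Objective~\eqref{eq:1}, while the $+p^+(S)$ term is exactly a nonnegative vertex \emph{reward}, which is the vertex-weight term that Goldberg's flow network for HDSP~\cite{GolFinding1984} already deals with. I would merge the two networks and solve the decision problem: for a threshold $\beta$, some $S$ has $(e[S]+p(S))/|S| > \beta$ if and only if $p^-(S) + \beta|S| - e[S] - p^+(S) < 0$ for some $S$.

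Concretely, take the extended hypergraph $\cH_\beta$ of Section~\ref{sec:flow-based-exact}: keep every $e \in \cE$ with the asymmetric, cardinality-based, submodular splitting function $g_e(S) = \min \frac{1}{|e|}\{|e\cap S|, \infty|e\setminus S|\}$, so that $e[S] = \nvol(S) - \sum_{e}g_e(S)$ by Eq.~\eqref{eq:2}. Attach a super-source $s$ with an edge $\{s,v\}$ of weight $\ndeg(v) + p^+(v)$, and a super-sink $t$ with an edge $\{v,t\}$ of weight $\beta + p^-(v)$, for every $v \in V$. A computation parallel to Eq.~\eqref{eq:3} gives, for all $S \subset V$,
\[
\cut_{\cH_\beta}(S \cup \{s\}) = \nvol(\cH) + p^+(V) - \left(e[S] + p(S)\right) + \beta|S|,
\]
so the minimum $s$-$t$ cut of $\cH_\beta$ drops below $\nvol(\cH) + p^+(V)$ exactly when some $S$ has $(e[S]+p(S))/|S| > \beta$. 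Since each $g_e$ is submodular, cardinality-based and asymmetric, the gadget of Veldt et al.~\cite{VBKHypergraph2020} (auxiliary node $v_e$, edges $v \to v_e$ of weight $\tfrac{1}{|e|}$ and $v_e \to v$ of weight $\infty$) replaces each hyperedge and reduces the decision problem to a single graph min \(\ST\) cut.

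For the search over $\beta$, note that $e[S] + p(S)$ is normalized and supermodular (supermodular $e[S]$ plus modular $p$), hence an instance of Problem~\ref{prob:dssnv}. The density improvement framework of Algorithm~\ref{alg:density-improve} therefore applies, its inner minimization $\min_S \beta|S| - e[S] - p(S)$ is precisely the min-cut computation above, and by Theorem~\ref{thm:strongly-polynomial} at most $|V|+1$ such flow computations suffice, giving an efficient exact algorithm. Binary search on $\beta$ over the bounded interval in which the optimum lies, or parametric max-flow when $p$ is irrational as in~\cite{GolFinding1984}, are equally valid alternatives.

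The only delicate point, and the main obstacle, is keeping all capacities nonnegative. The source weights $\ndeg(v)+p^+(v)$ are always nonnegative, but the sink weight $\beta + p^-(v)$ can be negative, because here $\beta$ itself may be negative: with no sign constraint on $p$, the optimum can be negative (already $d(\{v\}) = p(v)$ for a singleton). When $\beta < 0$ I would instead push the $\beta$-term to the source side using $\beta|S| = \beta|V| - \beta|\bar S|$: add $-\beta > 0$ to each source weight and subtract the constant $\beta|V|$ from the threshold $\nvol(\cH)+p^+(V)$; all weights are then nonnegative and the cut identity above persists verbatim with this shifted constant. Checking this sign case split and re-deriving the two cut identities is essentially all the work; the rest is bookkeeping on top of Objective~\eqref{eq:1}, Goldberg's construction, and the hypergraph-to-graph reduction.
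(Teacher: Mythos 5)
Your proposal is correct and matches exactly the approach the paper sketches in a single sentence ("Combining our techniques with Goldberg's flow network for HDSP leads to the following stronger result"): split $p = p^+ - p^-$, place the reward $p^+(v)$ on the source edge (Goldberg's HDSP idea) and the penalty $p^-(v)$ on the sink edge (the paper's construction for Objective~\eqref{eq:1}), derive the cut identity, and reduce hyperedges via the Veldt et al.\ gadget. Your cut identity $\cut_{\cH_\beta}(S\cup\{s\}) = \nvol(\cH)+p^+(V)-(e[S]+p(S))+\beta|S|$ checks out, your observation that $e[S]+p(S)$ is normalized and supermodular so the density improvement framework (or binary search / parametric flow) applies is right, and your handling of the $\beta<0$ case by moving the $\beta|S|$ term to the source side via $\beta|S|=\beta|V|-\beta|\bar S|$ is the "slight adjustment" the paper alludes to but does not spell out. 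In short, you supplied a complete and correct proof for what the paper leaves as a remark.
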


\subsection{A  Strongly-local Flow Algorithm}
\label{sec:strongly-local-flow}
We now show how to design a \emph{strongly-local} algorithm for Problem~\ref{prob:adsh}, meaning that the runtime depends only on the size of $R$. Showing how to obtain a runtime that is independent of global graph properties is the most technically challenging contribution of our paper. Our goal here is to strike a balance between obtaining strong theoretical guarantees of this form while ensuring the algorithm is practical. Thus, rather than pursuing the tightest possible analysis, we focus on providing the simplest exposition that leads to a runtime that is bounded exclusively in terms of
\(|R|\) and \(\Vol(R)\).

We first provide high-level intuition as to why strongly-local algorithms are possible. For Problem \ref{prob:adsh}, we have \(p(S) = \epsilon \Vol(S \cap \bar{R}) / 2\). This means that in the directed graph \(G_\cH\) presented in Section~\ref{sec:flow-based-exact},
every vertex will have one directed edge from the source node \(s\) with weight \(\ndeg(v)\) and one directed edge to the sink node \(t\) 
with weight \(\beta + \epsilon \deg(v \cap \bar{R}) / 2\). When solving a maximum $s$-$t$ flow problem in this graph, if $\varepsilon$ is large enough we can pre-route a significant amount of flow and saturate many of the edges leaving the source node $s$. In particular, for large enough $\varepsilon$, pre-routing flow in this way will saturate all edges $(s,v)$ for each $v \in \bar{R}$. In the remaining residual graph, the $s$ will only be adjacent to nodes in $R$, and the total weight of edges leaving $s$ will be much smaller than the total weight of edges entering $t$. In this way, the maximum $s$-$t$ flow value will be bounded in terms of the size of $R$ (rather than the size of the whole graph), and by carefully solving a sequence of smaller flow problems ``nearby'' $R$ we will be able to find the minimum $s$-$t$ cut of the entire graph \(G_\cH\) without having to visit all of its nodes and edges. In what follows we provide complete details for formalizing this intuition.
Formally, we prove for \(\epsilon \geq 1\), Problem~\ref{prob:adsh} can be solved exactly by a strongly-local algorithm.
Moreover, in Appendix~\ref{sec:eg-eps-lessthan1} we show that once \(\epsilon\) drops below 1, there are instances that the optimal solution is the whole graph, which does not admit a strongly-local algorithm. 

We assume throughout our analysis that \emph{\( d(R) = \Omega(1)\)}. In other words, the subhypergraph
induced by \(R\) has a density lower bounded by some universal constant. This will simplify the technical exposition without significantly changing the analysis. We could alternatively weaken this to a natural assumption that $R$ contains at least one hyperedge, which would only change the analysis slightly. 

As a warm-up we prove that when \(\epsilon \geq 2\), Problem~\ref{prob:adsh} is equivalent to finding the densest subhypergraph within \(R\). A strongly-local algorithm can then easily be obtained by considering only subsets of $R$. This provides additional intuition as to why strongly-local algorithms are possible for large enough $\varepsilon$.
\begin{lemma}
\label{lem:locality4largeeps}
  When \(\epsilon \geq 2\), $\max_{S \subset V} d(S) \Leftrightarrow \max_{S \subset R} d(S)$.
\end{lemma}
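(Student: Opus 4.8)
The plan is to prove the stronger statement that for every nonempty $S \subseteq V$ there is a subset of $R$ whose density is at least $d(S)$; this immediately gives the claimed equivalence of the two maximization problems (and shows any maximizer over $R$ is a maximizer over $V$). The driving intuition is that when $\epsilon \geq 2$, any vertex $v \in \bar R$ we put into $S$ pays a penalty $\tfrac{\epsilon}{2}\deg(v) \geq \deg(v)$, and $\deg(v)$ already dominates the number of fully-contained hyperedges through $v$, so such vertices can only hurt the objective.

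First I would fix an arbitrary nonempty $S \subseteq V$ and split it as $T = S \cap R$ and $B = S \cap \bar R$, so that $\vol(S \cap \bar R) = \vol(B)$ and $|S| = |T| + |B|$. The key combinatorial step is to separate the hyperedges counted by $e[S]$ into those contained in $T$ and those meeting $B$: writing $E_B$ for the set of $e \in \cE$ with $e \subseteq S$ and $e \cap B \neq \emptyset$, we have $e[S] = e[T] + |E_B|$, and since each edge of $E_B$ contains at least one vertex of $B$, a union bound gives $|E_B| \leq \sum_{v \in B} \deg(v) = \vol(B)$ (this over-counts edges meeting $B$ in several vertices, which only helps). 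Then, using $\epsilon \geq 2$,
\[
e[S] - \tfrac{\epsilon}{2}\vol(B) \;=\; e[T] + |E_B| - \tfrac{\epsilon}{2}\vol(B) \;\leq\; e[T] + \vol(B) - \vol(B) \;=\; e[T].
\]
Dividing by $|S| = |T| + |B| \geq |T|$ and using $e[T] \geq 0$ (a nonnegative count) yields $d(S) \leq e[T]/|T| = d(T)$ whenever $T \neq \emptyset$, because $T \subseteq R$ makes the penalty term of $d(T)$ vanish. In the degenerate case $T = \emptyset$ (i.e.\ $S \subseteq \bar R$), the same estimate gives $d(S) \leq 0$, while the standing assumption $d(R) = \Omega(1) > 0$ gives $\max_{S' \subseteq R} d(S') \geq d(R) > d(S)$. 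Either way $d(S)$ is matched or beaten by a subset of $R$, so $\max_{S \subseteq V} d(S) = \max_{S \subseteq R} d(S)$.

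I expect the only real care needed is bookkeeping rather than any deep idea: establishing $|E_B| \leq \vol(B)$ cleanly while handling hyperedges meeting $B$ in more than one vertex, and dispatching the $T = \emptyset$ (and $0/0 = -\infty$) corner cases via the $d(R) = \Omega(1)$ assumption. Once $\epsilon \geq 2$, the central inequality is immediate. As a remark, the argument also suggests the ``warm-up'' threshold is not tight — a more careful accounting with fractional volume would likely push the ``equivalent to densest subgraph inside $R$'' conclusion below $\epsilon = 2$ — but the clean statement at $\epsilon \geq 2$ is all that is needed here.
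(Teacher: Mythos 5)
Your proof is correct and takes essentially the same route as the paper: split $S$ into $T = S\cap R$ and $B = S\cap\bar R$, bound the edges touching $B$ by $\vol(B)$ (the paper phrases this as ``any hyperedge fully contained in $S$ and intersecting $S\cap\bar R$ is counted at least once in $\vol(S\cap\bar R)$'', which is your union bound for $|E_B|$), absorb this into the penalty via $\epsilon\geq 2$, and finish with the ratio inequality $e[T]/|S| \leq e[T]/|T|$; the degenerate $T=\emptyset$ case is dispatched identically via $d(R)>0$. Your write-up merely makes the edge decomposition and union bound slightly more explicit than the paper does.
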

\begin{proof}
  By our assumption, \(d(R)\) is positive. Hence the optimal \(S\) maximizing \(d(S)\) has to intersect \(R\),
  otherwise
  \[\frac{e[S] - \epsilon \vol(S \cap \bar{R})/2}{|S|} = \frac{e[S] - \epsilon \vol(S)/2}{|S|} \leq 0.\]
  For an \(S\) that intersects \(R\), let \(A = S \cap R\) and \(B = S \cap \bar{R} = S \setminus A\). Then
  \begin{align*}
    \frac{e[S] - \epsilon \vol(S \cap \bar{R})/2}{|S|}  &\leq \frac{e[S] - \vol(S \cap \bar{R})}{|S|}  
                                                      \leq \frac{e[S \cap R]}{|S|} \\
                                                      &\leq \frac{e[S \cap R]}{|S \cap R|} = \frac{e[A] - \epsilon \vol(A \cap \bar{R}) }{|A|}.
  \end{align*}
  For the second inequality we use the fact that any hyperedge fully contained in \(S\) and intersecting \(S \cap \bar{R}\) is
  counted at least once in \(\vol(S \cap \bar{R})\). The last equality follows from the definition of \(A\). This inequality shows that for any set \(S\) intersecting \(R\), removing vertices outside \(R\) will not make the answer worse.
  Thus it is equivalent to maximizing \(d(S)\) over \(S \subset R\).
\end{proof}
We now present a strongly-local algorithm for \(1 \leq \epsilon < 2\). We first bound the range of values containing the optimal value \(d^*\). 
\begin{lemma}
  \label{lem:opt-d}
  Let \(d^* = \max_{S \subset V} d(S)\), then for \(1 \leq \epsilon < 2\),
  \[ \bar{\Delta}(R) \geq d^* \geq \max_{S \subset R} d(S) \geq d(R) = \Omega(1) .\]
\end{lemma}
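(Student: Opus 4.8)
The plan is to read the four-term chain from the inside out, since only the outermost inequality $d^* \leq \bar{\Delta}(R)$ carries any content. The rightmost equality $d(R) = \Omega(1)$ is exactly the standing assumption on $R$. The inequality $\max_{S \subset R} d(S) \geq d(R)$ holds because $R$ itself is a feasible set for the inner maximization, and $d^* = \max_{S \subset V} d(S) \geq \max_{S \subset R} d(S)$ is immediate from $\{S : S \subset R\} \subseteq \{S : S \subset V\}$. So the whole task reduces to the upper bound $d^* \leq \bar{\Delta}(R)$.

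For that bound I would fix an arbitrary $S \subset V$ and estimate $d(S)$. As in the proof of Lemma~\ref{lem:locality4largeeps}, if $S \cap R = \emptyset$ then $d(S) = (e[S] - \epsilon \vol(S)/2)/|S| \leq 0 \leq \bar{\Delta}(R)$, so we may assume $S \cap R \neq \emptyset$. Next, Eq.~\eqref{eq:2} gives $e[S] = \nvol(S) - \sum_{e \in \cE} g_e(S) \leq \nvol(S)$ since each $g_e(S) \geq 0$. Splitting $\nvol(S) = \nvol(S \cap R) + \nvol(S \cap \bar{R})$ and invoking Lemma~\ref{lem:deg-connection} in the form $2\nvol(S \cap \bar{R}) \leq \vol(S \cap \bar{R})$, we obtain
\[
d(S) \leq \frac{\nvol(S \cap R) + \nvol(S \cap \bar{R}) - \frac{\epsilon}{2}\vol(S \cap \bar{R})}{|S|} \leq \frac{\nvol(S \cap R) + \frac{1-\epsilon}{2}\vol(S \cap \bar{R})}{|S|}.
\]
Because $\epsilon \geq 1$, the second summand in the numerator is nonpositive, so $d(S) \leq \nvol(S \cap R)/|S| \leq \nvol(S \cap R)/|S \cap R| \leq \bar{\Delta}(R)$, the last step being the averaging bound $\nvol(S \cap R) = \sum_{v \in S \cap R} \ndeg(v) \leq |S \cap R| \cdot \max_{v \in R} \ndeg(v)$. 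Taking the maximum over $S$ yields $d^* \leq \bar{\Delta}(R)$ and closes the chain.

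I do not expect a real obstacle here; this is a bookkeeping lemma. The only point requiring care is keeping the two notions of volume straight: the penalty in $d(S)$ is written with the ordinary volume $\vol$, whereas $e[S]$ is most naturally bounded by the fractional volume $\nvol$, and it is precisely the factor-of-two slack between them (Lemma~\ref{lem:deg-connection}) together with the hypothesis $\epsilon \geq 1$ that makes the contribution of $S \cap \bar{R}$ vanish. I would also sanity-check the degenerate situations — $|S| = 0$, $S \cap R = \emptyset$, and the borderline case where one weakens $d(R) = \Omega(1)$ to "$R$ contains a hyperedge" — to be sure the inequalities and the $0/0 = -\infty$ convention remain consistent.
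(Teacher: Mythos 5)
Your proposal is correct and follows essentially the same route as the paper: bound $e[S]$ by $\nvol(S)$ via Eq.~\eqref{eq:2}, use \lemmaref{deg-connection} and $\epsilon \geq 1$ to cancel the $S \cap \bar{R}$ contribution, and finish with the averaging bound $\nvol(S \cap R) \leq |S \cap R| \bar{\Delta}(R)$. One small point in your favor: you explicitly dispose of the case $S \cap R = \emptyset$ before dividing by $|S \cap R|$, whereas the paper's chain passes through $\nvol(S \cap R)/|S \cap R|$ without comment, which under the $0/0 = -\infty$ convention would be vacuous for that case — your care there is warranted, even if the case is trivial.
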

\begin{proof}
  On the one hand, we know that \(d^* \geq \max_{S \subset R} d(S) \geq d(R) = \Omega(1)\) by assumption. On the other hand, observe that
  \begin{align*}
    d(S) &= \frac{e[S] - \epsilon \vol(S \cap \bar{R})/2 }{|S|} \leq \frac{\nvol(S) - \frac{1}{2} \vol(S \cap \bar{R})}{|S|} \\
    &\leq  \frac{\nvol(S) - \nvol(S \cap \bar{R})}{|S|}
  \leq \frac{\nvol(S \cap  R)}{|S \cap R|} \leq \bar{\Delta}(R).
  \end{align*}
The first inequality relies on Eq. \eqref{eq:2} and the second inequality follows from Lemma~\ref{lem:deg-connection}.
\end{proof}
Hence, we only need to test those \(\beta\)s falling into this range of values that contains $d^*$. Recall that for a given parameter \(\beta\), one can verify whether there exists one set \(S\) such that \(\left( e[S] - p(S) \right) / |S| > \beta\) by
minimizing \(p(S) + \beta |S| - e[S]\) and comparing the minimum to \(0\). Since \(p(S) = \epsilon \vol(S \cap \bar{R})/2\), we specifically minimize
\begin{align}
  \label{eq:4}
  \epsilon \vol(S \cap \bar{R})/2 + \beta |S| - e[S].
\end{align}
The following lemma bounds the size of the optimal set $S^* = \argmax_{S\subseteq V} d(S)$, and the degree of nodes in $S^*$, in terms of the quantities that depend only on $R$.
\begin{lemma}
  \label{lem:opt-S}
  When \(\epsilon \geq 1\), let $S^* = \argmax_{S\subseteq V} d(S)$, then we have
  \begin{enumerate}
  \item \(|S^*| \leq \nvol(R) \). 
  \item \(\forall v \in S^*, \deg(v) = O ( \nvol(R) + \Delta(R) ).\)
  \end{enumerate}
\end{lemma}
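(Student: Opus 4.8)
The plan is to exploit the optimality of $S^*$ together with the lower bound $d^* = \Omega(1)$ from Lemma~\ref{lem:opt-d} to control the structure of $S^*$. The key inequality is $d(S^*) = d^* \ge d(R) = \Omega(1)$, which I will use in two distinct ways, one for each claim.

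For claim (1), I would start from $e[S^*] - \epsilon\vol(S^* \cap \bar R)/2 = d^* |S^*|$ and bound the left-hand side. Using Eq.~\eqref{eq:2}, $e[S^*] \le \nvol(S^*) = \nvol(S^* \cap R) + \nvol(S^* \cap \bar R)$, and since $\epsilon \ge 1$ combined with Lemma~\ref{lem:deg-connection} gives $\epsilon \vol(S^* \cap \bar R)/2 \ge \nvol(S^* \cap \bar R)$, the two $\bar R$-terms cancel (just as in the proof of Lemma~\ref{lem:opt-d}), leaving $d^* |S^*| \le \nvol(S^* \cap R) \le \nvol(R)$. Since $d^* \ge d(R) = \Omega(1)$, this already gives $|S^*| = O(\nvol(R))$; to get the clean statement $|S^*| \le \nvol(R)$ I would instead observe that $d^* \ge 1$ is not guaranteed, so more carefully I'd note $d(R) \ge 1$ need not hold either, but the weaker bound $|S^*| = O(\nvol(R))$ suffices for the runtime analysis. (If the paper truly wants the sharp $|S^*|\le\nvol(R)$, one can either invoke the assumption that $R$ contains a hyperedge so $d(R) \ge 1/|R| $ and push the constant through, or restate claim (1) with a big-$O$; I would flag this and proceed with the $O(\cdot)$ version.)

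For claim (2), fix $v \in S^*$ and compare $S^*$ with $S^* \setminus \{v\}$. By optimality, $d(S^* \setminus \{v\}) \le d^*$, which after clearing denominators becomes
\[
e[S^*] - e[S^* \setminus \{v\}] - \tfrac{\epsilon}{2}\bigl(\vol(S^*\cap\bar R) - \vol((S^*\setminus\{v\})\cap\bar R)\bigr) \ge d^*.
\]
The edge term $e[S^*] - e[S^*\setminus\{v\}]$ is the number of hyperedges of $S^*$ containing $v$, which is at most $\deg(v)$, and (if $v \in \bar R$) the volume term subtracts at least $\deg(v)\,\epsilon/2 \ge \deg(v)/2$ back. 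Rearranging gives $\deg(v)/2 \ge$ (something $\ge d^* - \deg(v)$)... that direction is the wrong way, so instead I would bound $\deg(v)$ from above directly: the number of hyperedges fully in $S^*$ containing $v$ is at most $\min(\deg(v), \nvol(S^*))$ crossed with a counting argument, and more cleanly, every such hyperedge lies inside $S^*$ so it contributes to $e[S^*] \le \nvol(S^*) \le \nvol(R) + \nvol(S^*\cap\bar R)$; combined with the size bound and the degree-volume conversions this caps $\deg(v)$. The cleanest route: if $v \in R$ then $\deg(v) \le \Delta(R)$ trivially; if $v \in \bar R$, use that $d(S^*) \ge d(R) > 0$ forces $e[S^*] \ge \epsilon\vol(S^*\cap\bar R)/2 \ge \deg(v)/2$, and $e[S^*] \le \nvol(S^*) \le \nvol(R) + \epsilon\vol(S^*\cap\bar R)/2 \le \nvol(R) + e[S^*]$... which is vacuous, so the binding constraint must instead come from $d^*|S^*| = e[S^*] - \epsilon\vol(S^*\cap\bar R)/2$ giving $e[S^*] \le d^*|S^*| + \epsilon\vol(S^*\cap\bar R)/2$ is also circular. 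The non-circular bound is: $e[S^*] \le \nvol(S^*\cap R) + \nvol(S^*\cap\bar R)$ and $\epsilon\vol(S^*\cap\bar R)/2 \ge \nvol(S^*\cap\bar R)$, so $d^*|S^*| \le \nvol(S^*\cap R) \le \nvol(R)$ (reproving (1)), and separately $\deg(v) = O(\nvol(S^*)) = O(|S^*| \cdot \bar\Delta(S^*))$ is not tight enough — the honest bound is that a hyperedge through $v$ inside $S^*$ has all endpoints in $S^*$, so $\deg(v) \le$ (number of hyperedges inside $S^*$) which can be as large as $\nvol(S^*) \le \nvol(R) + \nvol(S^*\cap\bar R)$, and $\nvol(S^*\cap\bar R) \le \frac{2}{\epsilon}e[S^*] \le 2 e[S^*]$, and $e[S^*] = d^*|S^*| + \epsilon\vol(S^*\cap\bar R)/2$; plugging $|S^*| \le \nvol(R)/d^* = O(\nvol(R))$ and $d^* \le \bar\Delta(R)$ yields $e[S^*] \le \bar\Delta(R)\cdot O(\nvol(R)) + \nvol(S^*\cap\bar R)$, hence $\nvol(S^*\cap\bar R) = O(\nvol(R)\bar\Delta(R))$ after moving terms, giving $\deg(v) = O(\nvol(R) + \nvol(R)\bar\Delta(R))$. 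I would then simplify using $\bar\Delta(R) \le \Delta(R)$ to land on $\deg(v) = O(\nvol(R) + \Delta(R))$ possibly after absorbing a $\nvol(R)$ factor — here I suspect the paper's intended bound uses the additional structural fact that $e[S^*]$ restricted to $\bar R$ vertices is small, or the $\Omega(1)$-density assumption more forcefully, so I would scrutinize exactly how the $\nvol(R)\Delta(R)$ is flattened to $\nvol(R)+\Delta(R)$.

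\textbf{Main obstacle.} The crux is claim (2): the naive "remove a vertex" exchange argument gives an inequality in the unhelpful direction, so the bound on $\deg(v)$ must instead be extracted from the global identity $d^*|S^*| = e[S^*] - \epsilon\vol(S^*\cap\bar R)/2$ by bounding $e[S^*]$ and $\vol(S^*\cap\bar R)$ against $R$-quantities, and the algebra is delicate precisely because $e[S^*]$ appears implicitly on both sides. I would carefully separate the $v \in R$ case (trivial) from $v \in \bar R$ (where one must show that $S^*$ cannot contain a high-degree node outside $R$, because the $\epsilon/2$ penalty per incident edge would have to be "paid for" by density that $R$ alone cannot supply), and I expect the $\epsilon \ge 1$ threshold to be exactly what makes the $\bar R$-volume term dominate the corresponding contribution to $e[S^*]$, closing the loop.
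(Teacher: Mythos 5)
Your treatment of claim (1) is essentially the paper's: cancel $\nvol(S^*\cap\bar R)$ against $\epsilon\vol(S^*\cap\bar R)/2$ (valid when $\epsilon\geq 1$ by Lemma~\ref{lem:deg-connection}) to get $d^*|S^*|\leq\nvol(S^*\cap R)\leq\nvol(R)$, then divide by $d^*=\Omega(1)$. Your flag that this yields $|S^*|=O(\nvol(R))$ rather than the literal $|S^*|\leq\nvol(R)$ is correct — the paper's own proof writes ``$\nvol(R)\geq\Omega(|S^*|)$, hence $|S^*|\leq O(\nvol(R))$,'' which only matches the clean stated bound if $d(R)\geq 1$.

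For claim (2), however, you have a genuine gap and you know it: every route you try for $v\in\bar R$ is either circular or lands on $O(\nvol(R)\bar\Delta(R))$, and you never locate the mechanism that flattens this. The missing idea is to \emph{split the degree by hyperedge size}. Write $\deg(v)=\deg_1(v)+\deg_2(v)$, where $\deg_1$ counts incident hyperedges of size exactly $2$ and $\deg_2$ counts those of size $\geq 3$. There are then two separate ``budgets'' coming from the inequality
\[
\nvol(R)\;\geq\;\sum_{e\in\cE}g_e(S^*)\;+\;\Bigl(\tfrac{\epsilon}{2}\vol(S^*\cap\bar R)-\nvol(S^*\cap\bar R)\Bigr)\;+\;\nvol(\bar S^*\cap R)\;+\;\Omega(|S^*|),
\]
obtained by expanding $e[S^*]=\nvol(S^*)-\sum_e g_e(S^*)$ (note you discarded the $g_e$ term when you passed to $e[S^*]\leq\nvol(S^*)$; retaining it is essential). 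Because every term on the right is nonnegative when $\epsilon\geq 1$, each is individually bounded by $\nvol(R)$. The term $\frac{\epsilon}{2}\vol(S^*\cap\bar R)-\nvol(S^*\cap\bar R)=\sum_{v\in S^*\cap\bar R}\sum_{e\ni v}(\frac{\epsilon}{2}-\frac{1}{|e|})$ gets a contribution of at least $\frac12-\frac13=\frac16$ from every incidence with a size-$\geq 3$ hyperedge, so $\vol_2(S^*\cap\bar R)\leq 6\nvol(R)$ and hence $\deg_2(v)\leq 6\nvol(R)$ for every $v\in S^*\cap\bar R$. This term says \emph{nothing} about size-$2$ edges (for $|e|=2$, $\frac12-\frac1{|e|}=0$), which is exactly why your attempts stalled. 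Size-$2$ hyperedges incident to $v\in S^*\cap\bar R$ are instead handled by splitting into those with the other endpoint in $S^*$ (at most $|S^*|=O(\nvol(R))$ of them) and those crossing to $\bar S^*$, which are counted by the $g_e$ budget: $\nvol(R)\geq\sum_{e\in\cE}g_e(S^*)\geq\frac12\sum_{e:\,|e|=2}\ind{|e\cap S^*|=1}$. Summing gives $\deg_1(v)=O(\nvol(R))$, and combining with the trivial $\deg(v)\leq\Delta(R)$ for $v\in S^*\cap R$ yields the claimed $O(\nvol(R)+\Delta(R))$.
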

Appendix \ref{sec:proof-lemma-1} provides a proof. This Lemma implies that when searching for the optimal \(S^*\), we can ignore vertices with very high degrees. This is done by adding a directed edge from those vertices to $t$ with weight \(\infty\). These edges will never be a part of the minimum $s$-$t$ cut, meaning that these vertices will never be part of \(S\).

The main challenge is to find a minimum \(\ST\) cut in \(\cH_\beta\) in a strongly-local manner. Recall from the construction of \(\cH_\beta\) in Section~\ref{sec:flow-based-exact} that for
every vertex \(v \in V\), there is an edge $(s,v)$ and another edge $(v,t)$. This means that the minimum \(\ST\) cut solution will have to cut one of these two edges for each vertex. Note that we can equivalently alter \(\cH_\beta\) so that each vertex in \(\bar{R}\) has \emph{either} an edge to the source $s$ or sink $t$ but not both. 
Concretely, for each \(v \in \bar{R}\), we can remove the edge connected to \(s\) with weight \(\ndeg(v)\)
and decrease the weight of the edge to \(t\) by \(\ndeg(v)\), so that the new weight is \(\beta + \epsilon \deg(v) / 2 - \ndeg(v)\). This is guaranteed to be nonnegative, since by Lemma~\ref{lem:deg-connection} and the assumption that $\varepsilon \geq 1$ we have \(\epsilon \deg(v) / 2 \geq \ndeg(v)\). This adjustment will change the value of the minimum \(\ST\) cut by \(\vol(\bar{R})\), but will not change the minimizer. In what follows we assume we are working with this slightly altered hypergraph; we overload the notation and still call this \(\cH_\beta\).

Our strongly-local procedure works by starting with a subset of \(\cH_\beta\) and growing it as needed in search for a global minimum  \(\ST\) cut solution. We assume for this process that the hypergraph is given by oracle accesses.
For each \(v \in V\), let \(\cN_\cE(v) = \{e \in \cE: e \ni v\}\) be the set of hyperedges that \(v\) belongs to and
\(\cN_\cE(S) = \bigcup_{v \in S} \cN_\cE(v)\). Let \(V(e) = \{v: v \in e\}\) be the set of vertices that belongs to \(e\)
and \(V(E) = \bigcup_{e \in E} V(e)\). Let \(\cN^{st}(v)\) denote those terminal edges incident to \(v\) and \(\cN^{st}(S) = \bigcup_{v \in S} \cN^{st}(v)\).
Given a vertex \(v \in V\) or a hyperedge \(e \in \cE\), we can efficiently query \(\cN_\cE(v)\) or \(V(e)\) respectively. 
Combining these two oracles, we can efficiently compute the vertex neighborhood of one vertex \(\cN_V(v) = \{u \in V: \exists e \text{ s.t. } u, v \in e\}\)
and \(\cN_V(S) = \bigcup_{v \in S} \cN_V(v)\).
We also assume some simple metadata are pre-stored together with the hypergraph, for example we can query \(\deg(v), \ndeg(v)\) for any \(v\),
and \(|e|\) for any \(e\) in \(O(1)\) time. Hence \(\cN^{st}(S)\) can be constructed efficiently.

Instead of building all of \(\cH_\beta\) explicitly and computing the minimum \(\ST\) cut, we alternate between the following two steps:
\begin{itemize}[leftmargin=*]
\item On the local hypergraph \(\cL \subseteq \cH_\beta\), compute a minimum \(\ST\) cut induced by a vertex set \(S_\cL\) where \(s \in S_\cL\) .
\item Expand the local hypergraph \(\cL\) based on the min \(\ST\) cut \(S_\cL\) obtained in the above step.
\end{itemize}
This procedure ends at a point where we can certify that the \(\ST\) cut on \(\cL\) is also a solution to the \(\ST\) cut on the entire hypergraph \(\cH_\beta\). 
Concretely, let \(\cL = (V_\cL \cup \{s, t\}, \cE_\cL \cup \cE_\cL^{st}, g)\) where \(V_\cL \subset V\) is a subset of the vertices of the hypergraph \(\cH\), 
\(\cE_\cL\) is a subset of the hyperedges in \(\cH_\beta\) and \(\cE_\cL^{st}\) is the set of the terminal edges in \(\cH_\beta\) between \(V_\cL\) and \(\{s, t\}\),
and \(g\) is the set of splitting functions corresponding to \(\cE_\cL \cup \cE_\cL^{st}\).
We initialize \(V_\cL\) to be \(R \cup \cN_V(R)\), in other words, the seed set union its vertex neighborhood. We initialize \(\cE_\cL\) to be \(\cN_\cE(R)\), in other words, those hyperedges touching the seed set \(R\). Finally we initialize \(\cE_\cL^{st}\) to be the terminal edges connected to \(V_\cL\).
When we grow \(\cL\), we always guarantee it remains a subhypergraph of \(\cH_\beta\), which means we always have
\begin{align}
  \label{eq:5}
\textbf{min-st-cut}_\cL \leq \textbf{min-st-cut}_{\cH_\beta}.
\end{align}
By carefully choosing how the local hypergraph \(\cL\) grows, we can guarantee that the inequality will reach equality,without ever having to explore the entire hypergraph. This growing process expands \(\cL\) by considering nodes in \(S_\cL\) and adding all of its neighboring edges and nodes from ${\cH_\beta}$ that are not already in the local hypergraph $\cL$. We specifically have the following two update rules:
\begin{itemize}
\item Update the vertex set by setting \(V_\cL \gets V_\cL \cup \cN_V(S_\cL)\).
\item Update the edge set by setting \(\cE_\cL \gets \cE_\cL \cup \cN_\cE(S_\cL), \cE_\cL^{st} \gets \cE_\cL^{st} \cup \cN^{st}(S_\cL)\).
\end{itemize}
To avoid adding the neighbor of one vertex multiple times, we keep a list \(X\) and mark those vertices as \emph{explored}. The algorithm ends when \(S_\cL\) does not introduce new vertices and edges. The whole procedure is summarized in Algorithm~\ref{alg:strongly-local-flow}. 
\begin{algorithm}[t]
\begin{algorithmic}[1]
\Require \(R, \epsilon, \beta\), oracle access to a hypergraph \(\cH\).      
\Ensure \(S\) minimizing Objective \eqref{eq:4} for \(p(S) = \epsilon \vol(S \cap \bar{R})/2\). 
\cutforspace{\emph{Initialization}}{}
\State \(V_\cL \gets R \cup \cN_V(R), \cE_L \gets \cN_\cE(R), \cE_\cL^{st} \gets \cN^{st}(V_\cL), X \gets R\)
\Repeat
\State $S_\cL = \text{Solve min-st-cut in } \mathcal{L}$ 
\State \(S_{\text{new}} = S_\cL \setminus \left( X \cup \{s, t\} \right)\) \Comment{\emph{Grow local hypergraph \(\cL\)}}
\State \(V_\cL \gets V_\cL \cup \cN_V(S_{\text{new}})\)
\State \(\cE_\cL \gets \cE_\cL \cup \cN_\cE(S_{\text{new}}), \cE_\cL^{st} \gets \cE_\cL^{st} \cup \cN^{st}(S_{\text{new}})\)
\State \(X \gets X \cup S_{\text{new}}\)
\Until{\(S_{\text{new}} = \emptyset\)} \\
\Return{\(S_\cL\)}
\end{algorithmic}
\caption{Strongly local method for solving Prob.~\ref{prob:adsh} when \(\epsilon \geq 1\).}
\label{alg:strongly-local-flow}
\end{algorithm}
Theorem~\ref{thm:optimal} guarantees this will find the minimum \(\ST\) cut set, and Theorem~\ref{thm:slocal} guarantees it will have a strongly-local runtime.
We defer the proof for Theorem~\ref{thm:slocal} to the Appendix \ref{sec:proof-thm-slocal}.  
\begin{theorem}
	\label{thm:optimal}
  When \(\epsilon \geq 1\), the optimal set \(S\) returned by Algorithm~\ref{alg:strongly-local-flow} minimizes
  the objective \eqref{eq:4}.
\end{theorem}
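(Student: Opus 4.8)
The plan is to show that when Algorithm~\ref{alg:strongly-local-flow} terminates, the local minimum $s$-$t$ cut $S_\cL$ it has computed on $\cL$ is actually a global minimum $s$-$t$ cut of $\cH_\beta$; the cut identity from Eq.~\eqref{eq:3} (with the constant offset coming from the source-edge adjustment of Section~\ref{sec:strongly-local-flow}) then immediately gives that $S_\cL \cap V$ minimizes Objective~\eqref{eq:4}.

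First I would record the loop invariants. Let $X$ be the set of explored vertices. I claim that at every iteration $R \subseteq X \subseteq V_\cL$, $\cN_V(X) \subseteq V_\cL$, $\cN_\cE(X) \subseteq \cE_\cL$, $\cN^{st}(X) \subseteq \cE_\cL^{st}$, and moreover every $e \in \cE_\cL$ satisfies $V(e) \subseteq V_\cL$. These follow by a short induction on iterations: the initialization puts $R$ into $X$ together with all of $R$'s incident hyperedges, terminal edges, and the vertices of those hyperedges, and the update step performs exactly the same insertions for every newly explored vertex of $S_{\text{new}}$. In particular $\cL$ always remains a sub-hypergraph of $\cH_\beta$, so $\textbf{min-st-cut}_\cL \le \textbf{min-st-cut}_{\cH_\beta}$ as in Eq.~\eqref{eq:5}; and since $|X|$ strictly increases whenever $S_{\text{new}} \neq \emptyset$, the loop terminates.

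The core step is the termination argument. At termination $S_{\text{new}} = \emptyset$, i.e. $S_\cL \setminus \{s,t\} \subseteq X$, and I would use this to show every edge of $\cH_\beta$ that is \emph{not} present in $\cL$ contributes zero to $\textbf{cut}_{\cH_\beta}(S_\cL)$, so that $\textbf{cut}_{\cH_\beta}(S_\cL) = \textbf{cut}_\cL(S_\cL)$. There are three cases. (i) An ordinary hyperedge $e \in \cE \setminus \cE_\cL$: if some $v \in e$ lay in $S_\cL \cap V \subseteq X$, then $e \in \cN_\cE(X) \subseteq \cE_\cL$, a contradiction; hence $e \cap S_\cL = \emptyset$ and, by asymmetry of the splitting function, $g_e(\emptyset)=0$. (ii) A sink edge $\{v,t\} \notin \cE_\cL^{st}$: if $v \in S_\cL$ then $v \in X$, so $\{v,t\} \in \cN^{st}(X) \subseteq \cE_\cL^{st}$, a contradiction; hence $v \notin S_\cL$ and the edge is uncut. (iii) A source edge $\{s,v\}$: after the adjustment these exist only for $v \in R$, and $R \subseteq V_\cL$ with all of $R$'s terminal edges inserted at initialization, so $\cH_\beta$ has no source edges outside $\cL$ at all. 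Conversely, since every $e \in \cE_\cL$ has $V(e) \subseteq V_\cL$, each edge of $\cL$ is an edge of $\cH_\beta$ with identical vertex set and hence identical contribution. Therefore $\textbf{cut}_{\cH_\beta}(S_\cL) = \textbf{cut}_\cL(S_\cL) = \textbf{min-st-cut}_\cL \le \textbf{min-st-cut}_{\cH_\beta}$; as $S_\cL$ (which contains $s$ and not $t$) is itself a valid $s$-$t$ cut of $\cH_\beta$, equality holds and $S_\cL$ is a global minimum $s$-$t$ cut.

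Finally I would translate back. By Eq.~\eqref{eq:3} plus the constant shift $\nvol(R)$ induced by the source-edge adjustment, $\textbf{cut}_{\cH_\beta}(S \cup \{s\}) = \nvol(R) + \bigl(\epsilon\,\vol(S \cap \bar R)/2 + \beta|S| - e[S]\bigr)$ for every $S \subseteq V$, so a minimizer of the cut is exactly a minimizer of Objective~\eqref{eq:4}; the $\infty$-weight sink edges added for high-degree vertices never occur in a finite-value cut and Lemma~\ref{lem:opt-S} keeps this consistent with the ultimately sought maximizer of $d$. I expect the main obstacle to be the three-case analysis in the termination step: one must be careful that the ``missing'' edges genuinely contribute nothing, and this is precisely where the source-edge adjustment (valid exactly because $\epsilon \ge 1$, via Lemma~\ref{lem:deg-connection}) and the asymmetry of the hyperedge splitting functions are used. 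The invariants themselves are routine but must be stated with enough precision that this step goes through.
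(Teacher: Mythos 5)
Your proof is correct and uses essentially the same sandwich argument as the paper: at termination, $\cut_\cL(S_\cL) = \cut_{\cH_\beta}(S_\cL)$ because every edge $S_\cL$ could cut has already been pulled into $\cL$, which forces $\minst_\cL = \minst_{\cH_\beta}$. The paper asserts the cut equality in one sentence (``all vertices in $S$ are explored''), while your loop invariants and three-case analysis of the missing hyperedges, sink edges, and source edges are simply an explicit elaboration of that same step (one small quibble: $g_e(\emptyset)=0$ follows directly from the definition of $g_e$, not from its asymmetry).
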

\begin{proof}
\label{sec:proof-thm-optimal}
Let \(\cL\) be the final local hypergraph when Algorithm~\ref{alg:strongly-local-flow} stops,
and \(S\) be the corresponding algorithm output.
By our algorithm design, \(S\) is the solution of \(\minst_{\cL}\) and all the hyperedges adjacent to \(S\)
are included in the local hypergraph \(\cL\) as all vertices in \(S\) are \emph{explored}.
Hence we have
\begin{align*}
	\minst_{\cL} &= \cut_{\cL}(S) 
	=  \cut_{\cH_\beta} (S) \\
	&\ge \minst_{\cH_\beta} 
	\ge \minst_{\cL}, 
\end{align*}
where the first equality is due to the optimality of \(S\) on \(\cL\), the second equality is because
those hyperedges can be cut by \(S\) has already all been included in \(\cL\), the first inequality is
due to the fact one \(\ST\) cut has value not less than the min. \(\ST\) cut, and the last inequality is
because of Eq. \eqref{eq:5}. This sandwich results shows that \(\minst_{\cL} = \minst_{\cH_\beta}\) and
\(S\) is also the global optimum.
\end{proof}
\begin{theorem}
	\label{thm:slocal}
  For \(\epsilon \geq 1\), the local hypergraph \(\cL\) will contain \(O((\nvol(R) + |R|) \delta)\) hyperedges and  
  at most \(O((\nvol(R) + |R|) \delta r)\) vertices where \(\delta = O(\nvol(R) + \Delta(R))\).
\end{theorem}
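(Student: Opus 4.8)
The plan is to control the size of the final local hypergraph by first bounding the set $X$ of vertices that Algorithm~\ref{alg:strongly-local-flow} ever marks as \emph{explored}, and then reading off $|\cE_\cL|$ and $|V_\cL|$ from the observation that, by the update rules, at termination $\cE_\cL = \cN_\cE(X)$ and $V_\cL \subseteq X \cup \cN_V(X)$.

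First I would record two facts about $\cH_\beta$ that hold for every $\beta$ at which the algorithm is invoked. By \lemmaref{opt-d} it suffices to take $\beta \in [d(R),\bar\Delta(R)]$, so $\beta = \Omega(1)$ by the standing assumption $d(R)=\Omega(1)$. After the modification described above, $s$ is adjacent only to vertices of $R$, through edges of weight $\ndeg(v)$; hence the value of a maximum $s$--$t$ flow in $\cH_\beta$, and so in every subhypergraph $\cL \subseteq \cH_\beta$, is at most $\nvol(R)$. Second, using \lemmaref{opt-S}(2) I would attach an $\infty$-weight edge $(v,t)$ to every vertex with $\deg(v)$ above the threshold $\delta = O(\nvol(R)+\Delta(R))$ of that lemma; then no vertex of degree $\omega(\delta)$ can sit on the source side of a minimum $s$--$t$ cut of any $\cL$, while every other vertex keeps an edge $(v,t)$ of weight at least $\beta$ (exactly $\beta$ for $v \in R$, and $\beta + \epsilon\deg(v)/2 - \ndeg(v) \ge \beta$ for $v \in \bar R$ by \lemmaref{deg-connection} and $\epsilon \ge 1$).

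The core of the proof is a charging argument. I would analyse the run in which each minimum cut is obtained by \emph{augmenting} the flow kept from the previous iteration; this produces the same minimum $s$--$t$ cuts, hence the same local hypergraph, so it is harmless for the analysis. Since $t$ has no outgoing arcs, every augmenting path ends at $t$ through a forward arc, so the flow on each edge $(\cdot,t)$ is non-decreasing over the whole run. Therefore, when a vertex $v$ first enters $S_\cL$ it lies on the source side of a minimum cut of the current $\cL$, so its edge $(v,t)$ crosses that cut and is saturated at that moment, and it stays saturated in the final flow $f^\star$. As $\sum_v f^\star(v,t)$ equals the value of $f^\star$, which is at most $\nvol(R)$, and each saturated such edge carries at least $\beta = \Omega(1)$ units, at most $O(\nvol(R))$ vertices outside $R$ are ever explored. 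Hence $|X| = O(\nvol(R)+|R|)$, and every $v \in X$ has $\deg(v) = O(\delta)$ --- vertices of $R$ because $\deg(v) \le \Delta(R)$, and explored vertices by the degree filtering.

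Assembling the bound is then routine: $|\cE_\cL| = |\cN_\cE(X)| \le \sum_{v\in X}\deg(v) = O((\nvol(R)+|R|)\delta)$ because $|\cN_\cE(v)| = \deg(v)$, and $|V_\cL| \le |X| + \sum_{v\in X}|\cN_V(v)| = O((\nvol(R)+|R|)\delta r)$ because $|\cN_V(v)| \le (r-1)\deg(v)$ --- so the possibly high-degree vertices dragged into $V_\cL$ are already counted as neighbours of low-degree explored vertices, and the $O(|\cE_\cL|)$ auxiliary gadget nodes are absorbed into the same bound. I expect the main obstacle to be the charging step of the third paragraph, i.e.\ showing $|X| = O(\nvol(R)+|R|)$: it requires running the analysis on the incremental-flow execution so that $(\cdot,t)$-edge flows are monotone, it requires an explored vertex to carry its edge to $t$ at the instant it first joins $S_\cL$, and it requires the hypergraph-to-graph gadget reduction of \sectionref{flow-based-exact} (auxiliary nodes with infinite back-edges) to leave intact both the $\nvol(R)$ bound on the flow value and the property that augmenting paths terminate at $t$ --- these are precisely what make ``every explored vertex is paid for by a $\ge\beta$ share of a flow of value at most $\nvol(R)$'' valid.
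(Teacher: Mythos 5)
Your proposal is correct and follows essentially the same approach as the paper: bound $|X|$ via a flow-saturation charging argument (the paper phrases this as an inductive claim that some maximum $s$--$t$ flow on $\cL_j$ saturates the $t$-edges of all previously explored vertices in $\bar R$; you use the equivalent observation that $t$-edge flows are monotone under incremental augmentation because every augmenting path terminates at $t$), bound $\Delta(X)$ via the degree filter from \lemmaref{opt-S}, and then count hyperedges and vertices. One phrasing to tighten: your remark that the incremental-flow run ``produces the same minimum $s$--$t$ cuts'' is both unnecessary and not quite right, since minimum cuts are generally non-unique; what actually makes the argument valid, and what the paper states explicitly, is that $\cL_j$ depends only on the sequence $(N_i)$ of newly explored vertex sets and not on the particular max flows computed, so the incrementally built flow can be constructed after the fact as a certificate for whatever $(N_i)$ sequence the algorithm actually produced.
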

\section{Experiments}


We implement our proposed algorithms in Julia. Specifically, we solve the min. \(\ST\) cut using a highest-label push-relabel algorithm with optimizations from \cite{CGImplementing1995}. We preprocess all the hypergraphs we use to remove dangling nodes, self-loops and multihyperedges.
Our code is available at \url{https://github.com/luotuoqingshan/local-DSHG}.

To demonstrate the advantages and differences of the anchored densest 
subhypergraphs found by Problems \ref{prob:adsh} and \ref{prob:adshf},
we compare them against running the anchored densest subgraph algorithm on the clique expansions of hypergraphs \cite{DQCAnchored2022}.  
The specific clique expansions we consider are unweighted clique expansion (UCE) and weighted clique expansion (WCE).
For WCE, each hyperedge \(e\) will be replaced by one clique where each edge has weight \(\frac{1}{|e|}\). For UCE, it is replaced by
one clique where each edge has weight 1. 

\subsection{Density Improvement vs. Binary Search}
\label{sec:exp-di}

\begin{table}
  \caption{Comparison between our Density Improvement (DI) Framework shown in Algorithm~\ref{alg:density-improve} and the standard binary search (BS).
  Time is the running time in seconds and iterations represent the number of subproblems solved.\({\overline{|e|}}\) indicates the average hyperedge size.}
  \label{tab:di}
  \begin{tabularx}{\linewidth}{lXXXXXXX}
  \toprule
  Datasets       & \(n\) & \(m\) & \({\overline{|e|}}\) & DI &  & BS &  \\ 
  \cmidrule(r){5-6}
  \cmidrule(r){7-8}
  & & & & time  & iters & time & iters \\
  \midrule
  Walmart& 87k & 65k & 6.9 & 6.4 & 9 & 18.5 & 43\\
  Trivago& 173k & 220k & 3.2 & 10.1 & 10 & 26.3 & 42\\
  Math SX & 153k & 563k & 2.6 & 19.5 & 8 & 89.5 & 47 \\
  Ask Ubuntu & 82k  & 114k & 2.3 & 2.6 & 10 & 8.7 & 43 \\
  Amazon & 4.2M & 2.3M & 17.2 &  2239& 10 & 9333 & 54\\
  \bottomrule
\end{tabularx}
\end{table}

\begin{figure}[t]
  \centering
  \includegraphics[width=0.9\linewidth]{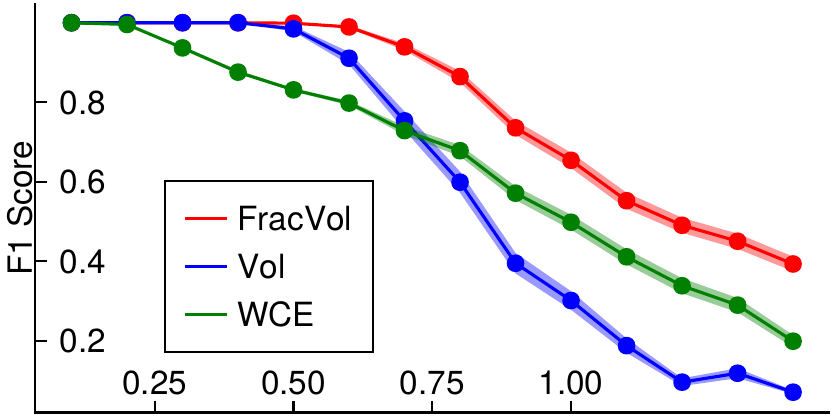} 
  \caption{Solving Probs~\ref{prob:adsh},~\ref{prob:adshf} (Vol, FracVol) outperforms Anchored Densest Subgraph (WCE) in planted set models. The x-axis represents difficulty $(m_1/m_2)$. Lines show mean F1 scores and bands show standard errors.}
  \label{fig:synthetic}
\end{figure}

To demonstrate that our Density Improvement Framework shown in Algorithm~\ref{alg:density-improve} has good performance in practice,
we perform comparison experiments against binary search on five different real-world hypergraph datasets, Walmart Trips~\cite{AVBClustering2020}, Amazon Reviews~\cite{NLMJustifying2019},
Trivago Clicks~\cite{CVBGenerative2021}, Threads Ask Ubuntu and Threads Math SX~\cite{BAS+Simplicial2018}.   
Since the search range and termination condition of binary search gets complicated when the complexity of the objective function increases, here we simply study the densest subhypergraph problem.   
In other words, \(f(S) = e[S]\). Each subproblem is solved by the same max. \(\ST\) flow solver. We compare two methods' running time and number of subproblems solved.

The results are summarized in Table~\ref{tab:di}.
Concretely, as is standard, for binary search, we let the search range be \([\frac{|\cE|}{|V|}, \bar{\Delta}(V)]\) and termination condition is when the search range becomes shorter than \(1/(|V|(|V| - 1))\) \cite{LMFBSurvey2023}. For our density improvement, we let \(S^0 = V\). We can see that on all five datasets, density improvement shows about 3.5x speed up, which demonstrates it is practical.

\begin{figure*}
    \begin{subfigure}{0.33\linewidth}
      \centering
         \includegraphics[width=1.0\linewidth]{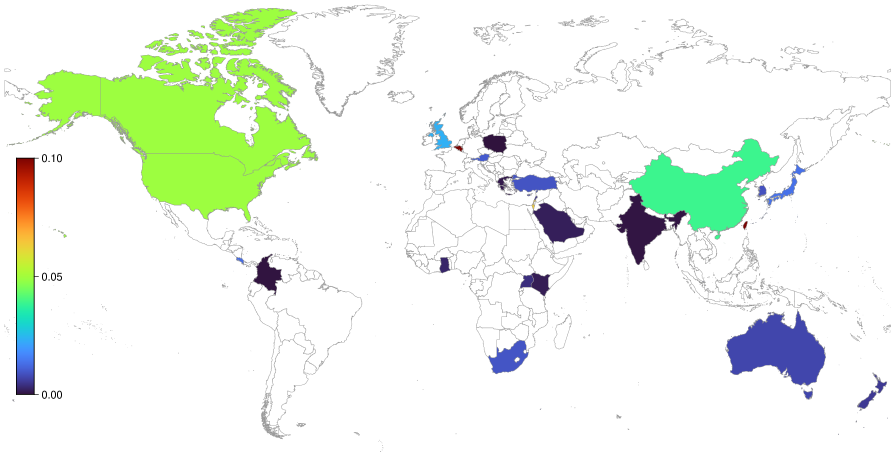}
    \caption{A 1543 vertex, density 22.34 \\  subhypergraph from Chinese universities}
    \end{subfigure}%
    \begin{subfigure}{0.33\linewidth}
      \centering
         \includegraphics[width=1.0\linewidth]{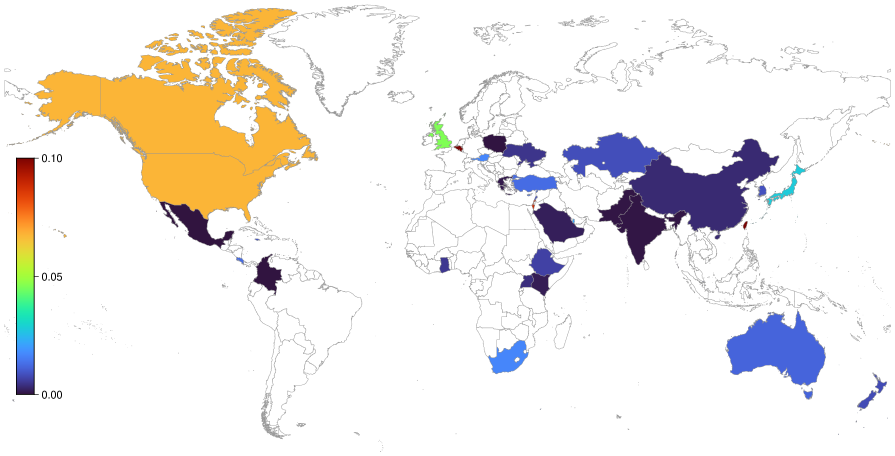}
    \caption{A 1923 vertex, density 19.76 \\ subhypergraph from on UK universities}
    \end{subfigure}%
    \begin{subfigure}{0.33\linewidth}
      \centering
         \includegraphics[width=1.0\linewidth]{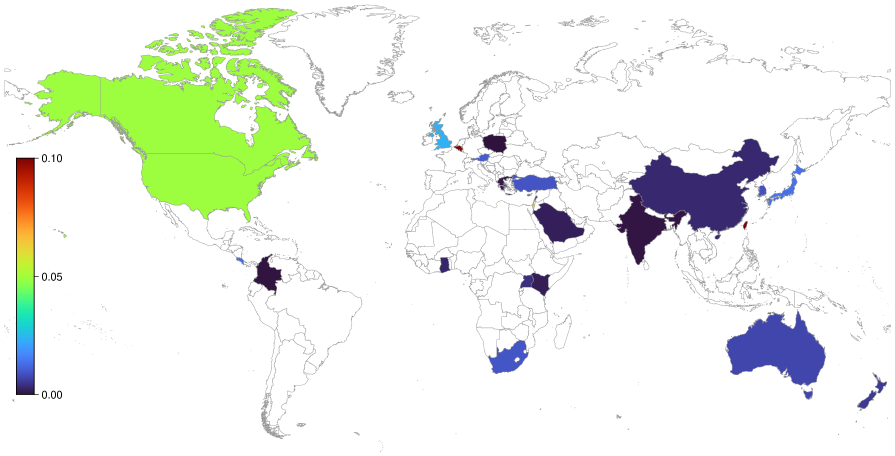}
    \caption{A 1356 vertex, density 23.51 \\ subhypergraph from the intersection}
    \end{subfigure}
    \caption{We show sets of domains as a colored map based on the number of domains associated with that region normalized by the total domains in the region.  (Our attribution of domain to region is imperfect, but should capture general trends.)}
    \label{fig:webdomain-hypergraph}
\end{figure*}

\begin{figure*}[t]
     \centering
     \hfill
     \begin{subfigure}[t]{0.245\linewidth}
         \centering
         \includegraphics[width=1.0\linewidth]{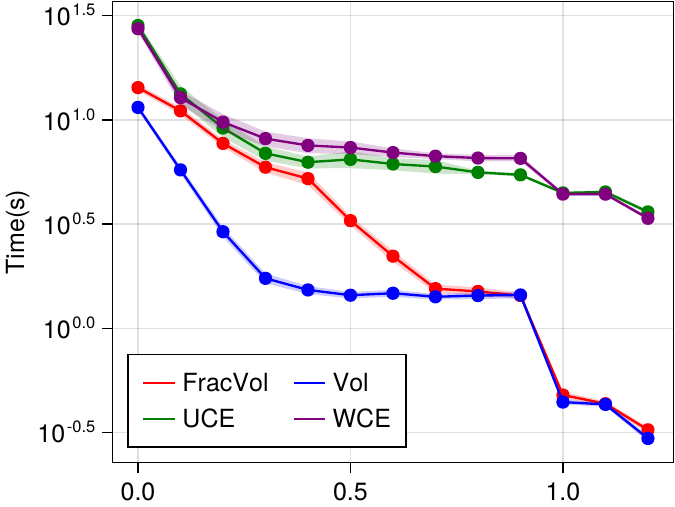}
         \caption{Walmart Trips}
     \end{subfigure}%
     \hfill
     \begin{subfigure}[t]{0.245\linewidth}
         \centering
         \includegraphics[width=1.0\linewidth]{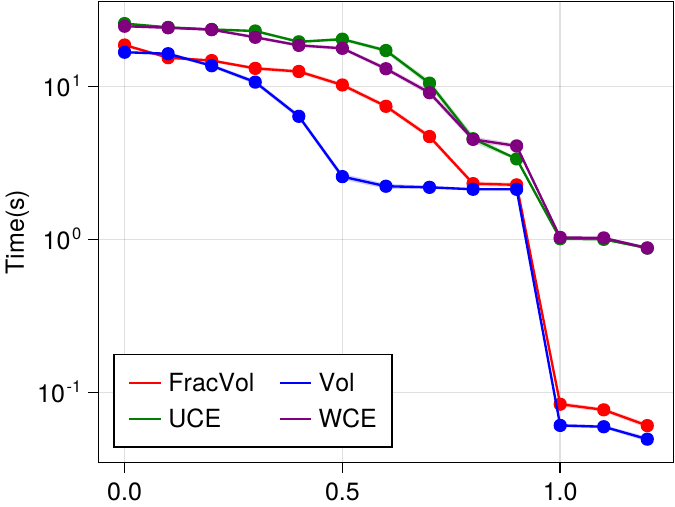}
         \caption{Trivago Clicks}
     \end{subfigure}
     \hfill
     \begin{subfigure}[t]{0.245\linewidth}
         \centering
         \includegraphics[width=1.0\linewidth]{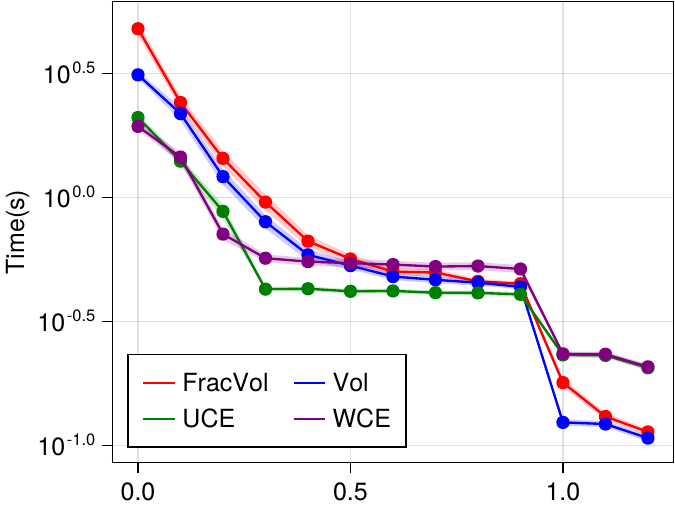}
         \caption{Threads Ask Ubuntu}
     \end{subfigure}
     \hfill
     \begin{subfigure}[t]{0.245\linewidth}
         \centering
         \includegraphics[width=1.0\linewidth]{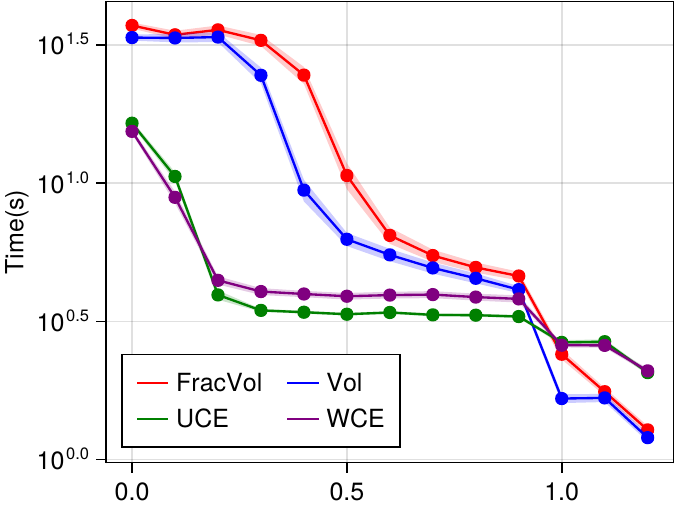}
         \caption{Threads Math SX}
        \hfill
     \end{subfigure}
     \caption{Running time comparison. x-axis represents \(\epsilon\). We generate 100 different \(R\) for each dataset and
       run each method on them. Here we report the mean and standard error of the running time. Specifically,
       each \(R\) is generated by randomly sampling 10 seed nodes and then expand them to a set with 200 nodes
     using random walks.}
\label{fig:running-time}
\end{figure*}

\subsection{Experiments with Planted Dense Sets}
\label{sec:exp-adshg}
We first study the capacity of our objectives to find dense subhypergraphs on problems with planted dense subsets. Specifically,
we build a graph with 1000 vertices and assign each vertex into one of the 30 clusters uniformly at random similar to a stochastic block model. Then we generate two kinds of hyperedges, \(m_1\) hyperedges between clusters
and \(m_2\) hyperedges inside clusters.
Each hyperedge is generated in a similar way. Given a vertex pool \(S\), we first sample two different vertices from \(S\), and then we iteratively
grow the hyperedge. In each iteration, with probability \(p\) we stop the generating process and with probability \(1 - p\) we sample another unique vertex from \(S\) and continue to
the next iteration. Once the hyperedge reaches some pre-determined max size threshold, we end the generating process.
For those \(m_1\) hyperedges between clusters, we let the vertex pool \(S\) be the whole vertex set \(V\) and for those \(m_2\) hyperedges inside clusters, we
pick a random cluster for each and set the vertex pool as vertices from that cluster.
This allows us to plant 30 dense sets into this 1000-vertex hypergraph.  
This is similar to scenarios for planted partitions in \emph{uniform} hypergraphs where each hyperedge has the same size~\cite{GDConsistency2014} and inspired by various ideas in random hypergraph and graph generation~\cite{Hafner2022,Chodrow2020,Akoglu2009}.

We let \(m_2 = 50000, p = 0.2\) and set max hyperedge size as 12, 
and then compare our objectives with the baselines to see
how well they can detect the underlying planted densest subhypergraphs when we vary \(m_1\).
As \(m_1\) increases, it will be much harder to detect the planted densest subhypergraphs as the background hypergraph gets denser and denser.  
The average hyperedge size in the hypergraphs we generate is 5.7. 
For each cluster, we generate 10 different seed sets \(R\) by sampling 5\% vertices from that cluster and performing length-2 random walks to grow it to a set with size equal to
1.5 times the cluster size. In total, we generate 300 seed sets. For each objective, we compute F1 score between the detected 
subhypergraph and the ground truth planted cluster.   

The results are summarized in Figure~\ref{fig:synthetic}. Here we do not show the result for UCE as it exhibits similar behavior with WCE empirically. We can see that
when the planted dense structures are relatively clear, i.e. the ratio \(\frac{m_1}{m_2}\) is relatively small, both Vol and FracVol penalty are able to perfectly detect the
planted dense cluster while WCE can not. As is expected, with \(m_1\) increasing, it is harder for all methods to recover the planted dense cluster but FracVol has a clear advantage throughout.

\subsection{Densely linked domains on the web} 
\label{sec:domain-hypergraph}

We perform a case study on the webgraph demonstrating our local dense subgraph tools' utility in network analysis. We take the host-level webgraph data from Common Crawl (\url{https://commoncrawl.org/blog/host-and-domain-level-web-graphs-oct-nov-jan-2020-2021}).  
It contains 490 million nodes and 2.6 billion directed edges between hosts.
We build a domain-level hypergraph by forming one hyperedge for each host within a domain. The contents of the hyperedge are all the domains linked from that host. We focus on the subgraph induced by the domain names of educational
and academic institutions. Concretely, we take all domains $\star$.edu, $\star$.ac.$\star$ or $\star$.edu.$\star$. 
This hypergraph has 147k vertices and 138k hyperedges, with average hyperedge size 11.3. 
One common phenomenon of densest subgraph like objectives
is that on real-world graphs, they usually do not have large densest subgraphs. On this hypergraph, the densest subhypergraph contains 105 nodes, 103 US domains and 2 UK domains (Oxford, Cambridge) with density 45.73. 

Our tools allow us to go beyond this simple set. 
Here we take domains from  the UK and mainland China as reference sets respectively and vary \(\epsilon\) from 0.0 to 1.5 to find large, and reasonably dense subhypergraph. 
We identify one anchored densest subhypergraph with size 1923 and density 19.76 from the UK, and one anchored densest subhypergraph with size 1543 and density 22.34
from mainland China. By intersecting those two sets, we can get a denser subhypergraph with 1356 nodes and density 23.51 that spans universities throughout the world. This is illustrated in Figure~\ref{fig:webdomain-hypergraph}.

\subsection{Running Time Comparison}

We compare runtime on real-world datasets and summarize results in Figure~\ref{fig:running-time}.
All methods run faster when \(\epsilon\) is large and their running time sharply decreases when 
 \(\epsilon\) enters the strongly local regime. Also, in general our anchored densest subhypergraph solvers are faster than clique expansion alternative when the hypergraph has a large mean hyperedge size, e.g. Walmart Trips and Trivago Clicks.

\section{Conclusion}
We propose two \emph{localized} densest subhypergraph objectives and demonstrate their utility through  experiments. Along the way, we also prove several interesting results for the general densest subgraph discovery problem. Future directions are making the algorithms scale to hypergraphs with billions of nodes and edges and futher exploring the space of \emph{localized} objectives for different scenarios. 

\section*{Acknowledgements}
DFG and YH are partially supported by NSF Award CCF-1909528, NSF IIS-2007481, and DOE DE-SC0023162. 

\bibliography{yufan,nate,refs-david, yufan-manual}
\bibliographystyle{ACM-Reference-Format}

\appendix
\section{proofs}
\subsection{Proof of Lemma~\ref{lem:opt-S}}
\label{sec:proof-lemma-1}
By Lemma \ref{lem:opt-d}, we have when \(1 \leq \epsilon < 2\),
\[ d^* = \frac{e[S^*] - \epsilon \vol(S^* \cap \bar{R})}{|S^*|} =  \Omega(1), \] 
which means 
\begin{align}
	& e[S^*] -\epsilon \vol(S^* \cap \bar{R})/2 \geq  \Omega(|S^*|)  \nonumber \\  
	\Rightarrow & \nvol(S^*) - \sum_{e \in \cE} g_e(S^*) - \epsilon \vol(S^* \cap  \bar{R})/2 \geq \Omega( |S^*|)  \nonumber \\
	\Rightarrow & \nvol(S^*) \geq  \sum_{e \in \cE} g_e(S^*) + \epsilon \vol(S^* \cap  \bar{R})/2 +  \Omega(|S^*|) \nonumber \\
	\Rightarrow & \nvol(S^* \cap R) = \nvol(S^*) - \nvol(S^* \cap \bar{R}) \nonumber \\
	& \geq \sum_{e \in \cE} g_e(S^*) + \left(\epsilon \vol(S^* \cap \bar{R})/2 - \nvol(S^* \cap \bar{R})\right)  +  \Omega(|S^*|) \nonumber \\  
	\Rightarrow &\nvol(R) = \nvol(S^* \cap R) + \nvol(\bar{S}^* \cap R) \nonumber \\
	& \geq \sum_{e \in \cE} g_e(S^*) + \left(\epsilon \vol(S^* \cap \bar{R})/2 - \nvol(S^* \cap \bar{R})\right)  \nonumber \\
	&  + \nvol(\bar{S}^* \cap R) + \Omega(|S^*|). \label{eq:10} 
\end{align}
Recall that \(\forall e \in \cE, g_e(S) = \min \frac{1}{|e|}\{|e \cap S|, \infty |e \setminus S|\} \geq 0\), thus \(\sum_{e \in \cE}^{} g_e(S^*)\) is nonnegative. 
By Lemma~\ref{lem:deg-connection}, we have \(\vol(S^* \cap \bar{R}) \geq 2 \nvol(S^* \cap \bar{R})\), which implies
\(\epsilon \vol(S^* \cap \bar{R})/2 - \nvol(S^* \cap \bar{R}) \geq 0\) when \(\epsilon \geq 1\).
Hence all four terms of RHS of \eqref{eq:10} are nonnegative. This means 
\begin{align}
	\nvol(R) &\geq \sum_{e \in \cE}^{} g_e(S^*),  \label{eq:11}  \\
	\nvol(R) &\geq \epsilon \vol(S^* \cap \bar{R})/2 - \nvol(S^* \cap \bar{R}), \label{eq:12}\\       
	\nvol(R) & \geq \Omega( |S^*|). \label{eq:13}
\end{align}
Hence Ineq.~\eqref{eq:13} concludes that \(|S^*| \leq O(\nvol(R))\). 

To bound the degree of \(v \in S^*\), we need to group the degree of each vertex into two subgroups contributed by hyperedges with different sizes.
Let \(\cE_1 = \{e \in \cE: |e| = 2\}, \cE_2 = \cE \setminus \cE_1\).
Let \(\deg_i(v) = \sum_{e \in \cE_i} \ind{v \in e}, \forall i \in \{1, 2\} \) . In other words, we count the degree contributed by
size-2 hyperedges separately. Accordingly, let \(\vol_i(S) = \sum_{v \in S} \deg_i(v), \cN_i(S) = \{e \in \cE_i:|e \cap S| > 0 \}, \forall i \in \{1, 2\}.\)

For any \(v \in S^* \cap R\), its degree is bounded by \(\Delta(R)\).
For any \(v \in S^* \cap  \bar{R}\), we consider its \(\deg_1(v)\) and \(\deg_2(v)\) separately.
Because of Ineq.~\eqref{eq:12} and \(\epsilon \geq 1\), we have  
\begin{align*}
	\nvol(R) &\ge \epsilon \vol(S^* \cap \bar{R})/2 - \nvol(S^* \cap \bar{R}) \\
	& \geq \frac{1}{2} \vol(S^* \cap \bar{R}) - \nvol(S^* \cap \bar{R}) \\
	&= \sum_{v \in S^* \cap \bar{R}}^{} \sum_{e \in \cE: e \ni v}^{} ( \frac{1}{2} - \frac{1}{|e|}) \\
	& \geq \frac{1}{6} \vol_2(S^* \cap \bar{R}), 
\end{align*}
which implies that for any \(v \in S^* \cap \bar{R}\), \(\deg_2(v) \leq \vol_2(S^* \cap \bar{R}) \leq 6 \nvol(R)\).

By Ineq.~\eqref{eq:11}, we have
\begin{align*}
	2 \nvol(R) &\geq 2 \sum_{e \in \cE}^{} g_e(S^*) \geq 2 \sum_{e \in \cE_1} g_e(S^*)  \\
	& = \sum_{e \in \cE_1} \min \{ |e \cap S^*|, \infty |e \setminus S^*|\} \\
	& = \sum_{e \in \cE_1} \ind{|e \cap S^*| = 1} \geq |\cN_1(S^*) \cap \bar{S}^*|
\end{align*}
where the first equality is due to the definition of \(g_e\) and the second equality is due to \(\cE_1\) only
contains size-2 hyperedges.
In summary, for any \(v \in S^* \cap \bar{R}\),
\begin{align*}
	\deg_1(v) &= |\cN_1(v)| \\
	&= |\cN_1(v) \cap S^*| + |\cN_1(v) \cap \bar{S}^*| \\
	& \leq |S^*| +   |\cN_1(S^*) \cap \bar{S}^*| \\
	& = O(\nvol(R)).
\end{align*}
Thus for any \(v \in S^* \cap \bar{R}\), we
have \(\deg(v) = O(\nvol(R))\). Combined with the case \(v \in S^* \cap R\) that
we have discussed, we get the conclusion that for any \(v \in S^*\), \(\deg(v) = O(\nvol(R) + \Delta(R))\). 

\subsection{Proof of Theorem \ref{thm:slocal}}
\label{sec:proof-thm-slocal}
One key difficulty of proving Theorem~\ref{thm:slocal} is that the vertex set returned by Algorithm~\ref{alg:strongly-local-flow}     
is the solution of the decision problem \eqref{eq:4} instead of the original optimization Problem \ref{prob:adsh}.
To better distinguish them, let \(S_\beta\) be the minimizer of the problem in Eq.~\eqref{eq:4} with parameter \(\beta\) given by
the decision problem, and let \(S^*\) be the maximizer of Problem \ref{prob:adsh}, \(d^* \coloneqq d(S)\). 
Lemma~\ref{lem:opt-d} shows the range of values where \(d^*\) lies, hence this is also the range of \(\beta\) values we need to test, in other words,
\(\Omega(1) \leq \beta \leq \bar{\Delta}(R).\)

Recall that \(\delta = \nvol(R) + \Delta(R)\). The high-level idea of the proof is that we prove the following two claims 
\begin{itemize}
	\item The set of \emph{explored} nodes \(X\) has size \(O(|R| + \nvol(R))\).
	\item The maximum degree of vertices inside \(X\) is \(O(\delta)\).
\end{itemize}
By design, Algorithm~\ref{alg:strongly-local-flow} only adds hyperedges to the local hypergraph if they are in the neighborhood of some node in \(X\). Thus, if we can proves the two claims above, 
this guarantees
the number of hyperedges inside \(\cL\) is bounded by \(|X| \Delta(X) = O(\nvol(R) \delta)\).
Since the rank of the hypergraph is \(r\), the number of nodes inside \(\cL\) is \(O(\nvol(R) \delta r)\).

\textbf{Additional Notation.} To prove these two claims, we introduce some additional algorithmic notation. Recall that for our strongly-local algorithm we are working with the version of $\mathcal{H}_\beta$ where node $v \in \bar{R}$ has no edge from the source, but has an edge $(v,t)$ of weight  \(\beta + \epsilon \deg(v) / 2 - \ndeg(v)\geq 0 \). We call this the terminal edge or $t$-edge of node $v$. Let \(\cL_i = (V_i \cup \{s, t\}, \cE_i \cup \cE_i^{st})\) be the local hypergraph on which we solve the minimum \(\ST\)  cut problem in the \(i\)-th iteration.
Let \(S_i\) be the minimum \(\ST\) cut set in the \(i\)-th iteration. 
Recall from Section~\ref{sec:strongly-local-flow} that we find the minimum \(\ST\) cut in the hypergraph by reducing it to
a graph minimum \(\ST\) cut problem by replacing each hyperedge with a certain graph cut gadget. 
Let \(G_i\) represent the reduced graph for $\cL_i$, and \(S_i'\) be the minimum $s$-$t$ cut set in $G_i$. Recall that $G_i$ contains all the same nodes as $\cL_i$ in addition to new auxiliary nodes. The construction of $G_i$ is designed so that $S_i = S_i' \cap V_i$. None of the auxiliary nodes have edges to $s$ or $t$. Let $N_i$ be the set nodes that are added to $X$ for the first time in iteration $i > 1$. 
In other words, $N_i$ is the set of nodes in $\bar{R}$ whose $t$-edge was cut for the first time in iteration $i > 1$. 
When Algorithm~\ref{alg:strongly-local-flow} terminates, we have $X = R \cup \bigcup_{i = 1}^T N_i$.

\textbf{Implicit flow claim.} In practice, a graph minimum \(\ST\) cut is typically computed by solving the dual maximum \(\ST\) flow problem. By the min-cut/max-flow theorem, every edge in the minimum $s$-$t$ cut is saturated by the maximum $s$-$t$ flow. The solution to these problems may not be unique, but this holds independent of which min-cut or max-flow we find. In practice we can use any minimum $s$-$t$ cut or maximum $s$-$t$ flow. However, we will prove the following result regarding the existence of a maximum $s$-$t$ flow with a special property, in order to later bound the size of the set $X$.
\begin{claim}
	In iteration $j > 1$ of Algorithm~\ref{alg:strongly-local-flow}, there exists some maximum $s$-$t$ flow that saturates the $t$-edge of every node in $X \cap \bar{R} =  \bigcup_{i = 1}^j N_i$.
\end{claim}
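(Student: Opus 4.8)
The plan is to prove the claim by induction on the iteration index, strengthening it so that the witnessing flow is constructed incrementally. For each iteration $j$ I would exhibit a maximum $s$-$t$ flow $f_j$ on the reduced graph $G_j$ such that (a) $f_j$ saturates the $t$-edge of every node of $\bar R$ that is in $X$ at the end of iteration $j$, and (b) $f_j$ is obtained by first extending $f_{j-1}$ with $0$ flow on every arc of $G_j$ absent from $G_{j-1}$ and then augmenting to maximality. Two structural facts make this possible. First, the passage from $\cL_j$ to $\cL_{j+1}$ (and hence from $G_j$ to $G_{j+1}$) is purely additive: only new vertices, new hyperedge gadgets, and new terminal arcs are introduced, and the capacity $\beta+\epsilon\deg(v)/2-\ndeg(v)$ of a $t$-edge $(v,t)$ is computed from the \emph{global} degrees of $v$ in $\cH$, so it is fixed the moment $v$ enters $V_\cL$ and never changes afterward. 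Second, and consequently, zero-extending $f_{j-1}$ to $G_j$ yields a feasible (generally non-maximum) flow, which standard augmentation turns into a maximum flow $f_j$.

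The heart of the argument is the observation that \emph{flow augmentation never decreases the flow on any $t$-edge}. Expressing $f_j-\tilde f_{j-1}$ (where $\tilde f_{j-1}$ is the zero-extension of $f_{j-1}$ to $G_j$) via a flow decomposition in the residual graph of $\tilde f_{j-1}$, the difference is a nonnegative combination of simple $s$-$t$ paths and cycles. Since $t$ is the unique sink, no arc leaves $t$, so $t$ appears only as the last vertex of such a path and never on a cycle; hence each path in the decomposition adds flow to exactly one $t$-edge and zero to the others, and cycles touch no $t$-edge at all. Therefore $f_j$ restricted to the $t$-edges dominates $\tilde f_{j-1}$ restricted to them pointwise; in particular every $t$-edge saturated by $f_{j-1}$ is still saturated by $f_j$.

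Assembling the induction is then routine. In the first iteration that adds a node of $\bar R$ to $X$, any maximum flow of that $G_j$ saturates every arc crossing the particular minimum $s$-$t$ cut $S_j'$ returned by the subroutine (max-flow/min-cut duality); the newly explored nodes of $\bar R$ are precisely those on the source side of this cut, so their $t$-edges are saturated, and we take $f_j$ to be this flow (the inductive hypotheses for the earlier, vacuous iterations hold trivially). For the step from $j$ to $j+1$: zero-extend $f_j$ to $G_{j+1}$ and augment to a maximum flow $f_{j+1}$. By the key observation, $f_{j+1}$ still saturates the $t$-edges of every $\bar R$ node explored through iteration $j$; and being a maximum flow, $f_{j+1}$ saturates every arc crossing the minimum cut $S_{j+1}'$ computed in iteration $j+1$, hence the $t$-edges of the $\bar R$ nodes newly added to $X$ in that iteration. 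Together these are exactly $X\cap\bar R=\bigcup_{i\le j+1}N_i$, completing the induction.

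The main obstacle — and the reason the statement asserts existence of \emph{some} maximum flow rather than a property of all of them — is that a node explored in an early iteration can lie on the sink side of the minimum cut in a later iteration, in which case a generic maximum flow of $G_j$ need not saturate that node's $t$-edge. The warm-started, augmentation-only construction is precisely what rules this out, through the monotonicity of $t$-edge flow under augmentation. Making that monotonicity airtight — together with the bookkeeping that the zero-extension of $f_{j-1}$ really is feasible in $G_j$ (the new auxiliary gadget nodes carry no terminal arcs, and the $\infty$-capacity back arcs impose no constraint violated by a zero-flow start) — is the delicate part of the write-up.
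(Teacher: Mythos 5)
Your proposal is essentially the paper's own proof: induction on iterations, warm-starting iteration $j$'s max-flow computation from the flow $F_{j-1}$ (zero-extended, since $\mathcal{L}_{j-1}$ is a subhypergraph of $\mathcal{L}_j$), augmenting to maximality, and arguing that augmentation never unsaturates a previously saturated $t$-edge, so the inductive invariant propagates. The base case (every max flow saturates the $t$-edges of $N_1$ by LP duality) is also the same. The paper's justification of the monotonicity step is a one-liner (``there would never be any net gain to sending flow from $t$ back to another node''), whereas you try to give a rigorous flow-decomposition argument, which is the delicate point you yourself flag.

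One caution about that decomposition argument as you stated it. You claim that in a flow decomposition of $f_j - \tilde f_{j-1}$ in the residual graph of $\tilde f_{j-1}$, ``no arc leaves $t$.'' That is a property of $G_j$, not of the residual graph: the residual graph $R(\tilde f_{j-1})$ does contain reverse arcs $(t,v)$ for every $t$-edge $(v,t)$ carrying positive flow under $\tilde f_{j-1}$, so cycles passing through $t$ are structurally available, and for an \emph{arbitrary} max flow $f_j$ the decomposition of the difference may include such a cycle (trading saturation from one $t$-edge to another). The correct version of your observation is at the level of individual augmenting paths rather than of the decomposition of the end-to-end difference: every augmenting step pushes flow along a simple $s$-$t$ path in the \emph{current} residual graph, and a simple $s$-$t$ path terminates at $t$ and hence uses no arc with tail $t$ --- reverse or otherwise. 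Since all residual arcs entering $t$ correspond to forward $t$-edges (because $G_j$ has no arcs out of $t$), each augmenting step can only increase flow on $t$-edges. Summing over the augmenting steps gives pointwise monotonicity of $t$-edge flow, which is the property you need. Alternatively, one can note that any cycle in the decomposition can be cancelled without changing the flow value, so some max flow with the desired decomposition always exists. Either patch closes the gap; the rest of your induction is sound and aligns with the paper's.
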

\begin{proof}
	We prove the claim by induction. $N_1$ is exactly the set of nodes in $\bar{R}$ whose $t$-edge is cut by the minimum $s$-$t$ cut of $G_1$. By the min-cut/max-flow theorem, \emph{every} maximum $s$-$t$ flow will saturate the $t$-edges of $N_1$, so the result holds for the base case $i = 1$. 
	
	For the induction hypothesis we assume that in iteration $j - 1$, there is some maximum $s$-$t$ flow in the local graph $\mathcal{L}_{j - 1}$, call it $F_{j-1}$, that saturates the $t$-edge of every node in $\bigcup_{i = 1}^{j-1} N_i$. Consider the next local graph $\mathcal{L}_{j}$. Its construction does not depend in any way on the specific flow function $F_{j-1}$. Rather, the way it which it expands $\mathcal{L}_{j - 1}$ depends only on the set of nodes $N_{j-1}$ that were newly \emph{explored} in the previous iteration. Our goal is to prove that there exists some maximum $s$-$t$ flow that saturates every node in $\bigcup_{i = 1}^j N_i$. We do this by construction. Observe that $F_{j-1}$ is already a feasible flow for $\mathcal{L}_{j}$ since $\mathcal{L}_{j - 1}$ is a subgraph of $\mathcal{L}_{j}$. Starting from $F_{j-1}$, we can search for augmenting paths along which to send additional flow through $\mathcal{L}_{j}$ until we eventually reach a maximum $s$-$t$ flow. We can assume without loss of generality that we never \emph{unsaturate} a $t$-edge that was saturated by $F_{j-1}$, as there would never be any net gain to sending flow from $t$ back to another node.
	
	 Let $F_{j}$ denote the flow obtained by starting with $F_{j-1}$ and augmenting it until it is a maximum $s$-$t$ flow for $\mathcal{L}_j$. By the min-cut/max-flow theorem, it must saturate the $t$-edge of every node in $N_j$, since nodes in $N_j$ are in the minimum $s$-$t$ cut set returned in iteration $j$. By construction, $F_j$ also saturates the $t$-edge of every node in $\bigcup_{i = 1}^{j-1} N_i$. We therefore have the desired result that $F_j$ saturates the $t$-edge of every node in $\bigcup_{i = 1}^{j} N_i$.
\end{proof}

\textbf{Bounding the size of \(X\).} By the above claim, when Algorithm~\ref{alg:strongly-local-flow} terminates in iteration $T$, there exists a maximum $s$-$t$ flow that saturates the $t$-edge of every node in $X \cap \bar{R}$. Recall that the edges adjacent to \(s\) have weights that sum up to \(\nvol(R)\), so the maximum flow through $\mathcal{L}_T$ is at most \(\nvol(R)\). The $t$-edge of every node in $X$ has weight at least \(\beta = \Omega(1)\). Since every node in $X \cap \bar{R}$ has its $t$-edge saturated by \emph{some} maximum $s$-$t$ flow, the number of nodes in $X \cap \bar{R}$ is at most  \(\frac{\nvol(R)}{\beta} = O(\nvol(R))\). Thus, $|X| \leq O(|R| + \nvol(R)))$.

\textbf{Bounding the degree of vertices in \(X\).}
Lemma~\ref{lem:opt-S} shows that \(\Delta(S^*) = O(\delta)\) where \(\delta = \nvol(R) + \Delta(R)\). 
As mentioned before, this means that when solving Problem \ref{prob:adsh}, we can ignore those vertices with degree higher than this threshold. 
This implies that when solving the decision version, one can also tweak the decision problem a little bit and only focus on vertices with degree
not higher than this threshold. In other words, \(\Delta(S_\beta) \leq O(\delta)\). Formally, let \(V_\delta\) be the set of vertices with degree
\(O(\delta)\), then by Lemma~\ref{lem:opt-S}, we have
\begin{align*}
	\max_{S \subset V} \frac{e[S] - \epsilon \vol(S \cap \bar{R})/2}{|S|} = \max_{S \subset V_\delta} \frac{e[S] - \epsilon \vol(S \cap \bar{R})/2}{|S|}.
\end{align*}
So given a parameter \(\beta\), instead of solving the decision problem via computing  
\begin{align*}
	\min_{S \subset V} \beta |S| - e[S] + \epsilon \vol(S \cap \bar{R})/2,
\end{align*}
we compute 
\begin{align*}
	\min_{S \subset V_\delta} \beta |S| - e[S] + \epsilon \vol(S \cap \bar{R})/2.
\end{align*}
This can be simply achieved via keeping the same \(s-t\) flow network construction and add one edge from those vertices in \(V \setminus V_\delta\) to
the super sink \(t\) with weight \(\infty\). In this way, we can guarantee all vertices inside \(S_i\) have degree \(O(\delta)\). As vertices
inside \(R\) has degree \(O(\delta)\), \(X\) has maximum degree as \(O(\delta)\).

\section{Counterexample for Greedy Peeling}
\label{sec:counter-greedy-peeling}
We adopt the greedy peeling algorithm for Problem \ref{problem:dss} mentioned in \cite{CQTDensest2022} (Theorem 3.1). For completeness, we restate it here.   
For a normalized nonnegative supermodular set function \(f: 2^V \to \R_{\geq 0}\),
we first initialize \(S \coloneqq V\) and then we recursively find \(v = \underset{v \in S}{\argmin} f(v \mid S - v)\) and update \(S \coloneqq S \setminus \{v\}\) until \(S\) becomes empty.
Here \(f(v \mid S - v) \coloneqq f(S) - f(S \setminus \{v\})\) is the marginal gain brought by element \(v\) to the set \(S\). 
We see that when \(f(S) = e[S]\) as in the classical DSG, \(f(v \mid S - v)\) becomes the degree of vertex \(v\) in the subgraph \(G[S]\).

\begin{figure}[t]
	\centering
	\includegraphics[width=1.0\linewidth]{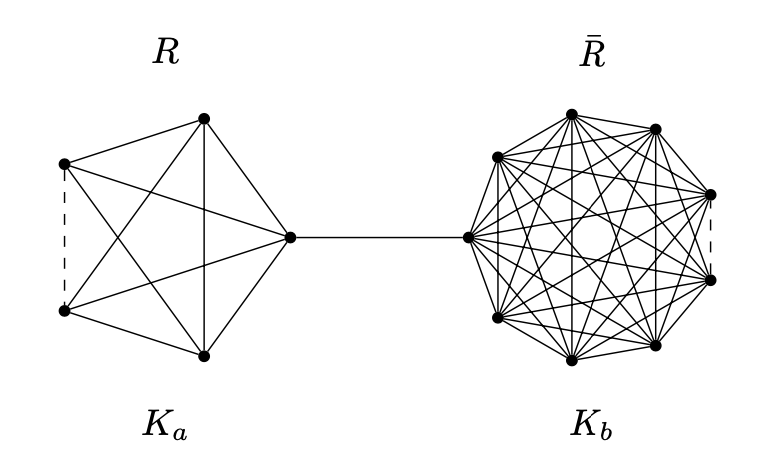} 
	\caption{A counter example for greedy peeling. }
	\label{fig:peeling_counter}
\end{figure}

Now we are ready to present one example which shows that greedy peeling may perform very poorly when \(f\) is not guaranteed to be nonnegative.
Here we give a counterexample on a graph, which is a special case of a hypergraph.
Consider the graph in Figure~\ref{fig:peeling_counter}, which contains two cliques linked by one edge.
The clique on the left-hand side contains \(a\) vertices and the clique on the right-hand side contains \(b\) vertices. We let \(b = 9a\). We denote
the left clique by \(R\) and the right clique by \(\bar{R}\) accordingly. We consider the following objective 
\[\max_{S \subset V} \frac{e[S] - p(S)}{|S|}\]
where \(p(v) = \frac{2}{3} b\) for \(v \in \bar{R}\) and \(p(v) = 0\) for \(v \in R\). For this example, greedy peeling will first remove vertices
from \(R\) as the marginal gain brought by vertices from \(\bar{R}\) is at least \(\frac{1}{3}b = 3a\) and on the contrary
the marginal gain brought by vertices from \(R\) is at most \(a\). Hence greedy peeling will not start peeling off vertices from \(\bar{R}\)
until the whole \(R\) is peeled off. Then by symmetry, vertices from \(\bar{R}\) will be peeled off in any random order.

We notice that when the intermediate subgraph only contains vertices inside \(\bar{R}\),
the objective is negative since \(e[S] - p(S) = \binom{|S|}{2} - \frac{2}{3} b |S| < 0 \).

When the intermediate subgraph still contains vertices from \(R\), as pointed out before, at this time the whole clique inside \(\bar{R}\)
is also contained in the intermediate subgraph. Assume it contains \(x\) vertices from \(R\), then the objective now is
\begin{align*}
	\quad \frac{\binom{x}{2} + 1 + \binom{b}{2} - \frac{2}{3} b^2 }{x + b} \leq  \frac{\binom{a}{2} + 1 + \binom{b}{2} - \frac{2}{3} b^2}{b} < 0 
\end{align*}
as we let \(b = 9a\). 
This means the peeling algorithm will only output negative answer on this example as we treat \(0/0 = -\infty\). However the optimal solution is choosing \(S = R\) and the optimum is \(\frac{a-1}{2}\).

\section{Example for \(\epsilon < 1\)}
\label{sec:eg-eps-lessthan1}
\begin{figure}[t]
	\centering
	\includegraphics[width=0.8\linewidth]{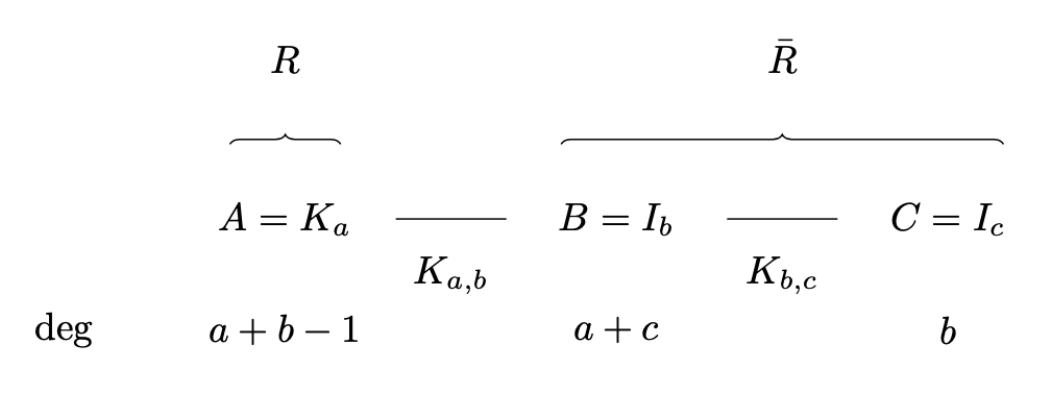}
	\caption{One counter example that shows strong locality is not guaranteed when \(\epsilon < 1\). It consists of three set of vertices \(A, B, C\),
		and they contain \(a, b, c\) vertices respectively.
		\(A \) form a clique and \(B, C\) are both independent sets. There is an edge between each vertex of \(A \) and \(B\), in other words
		there is a complete bipartite graph between them, the same holds for \(B\) and \(C\). The seed set \(R\) is \(A\). The degree for vertices
		in \(A, B, C\) are \(a - 1 + b, a + c\) and \(b\) respectively.}
	\label{fig:locality_counter}
\end{figure}

First we give some high-level intuition why strong locality may fail when \(\epsilon < 1\). Let us consider the degenerated case, graphs.
Then the objective we focus on would be equivalent to maximizing 
\[\frac{e[S] - \frac{\epsilon}{2} \vol(S \cap \bar{R})}{|S|},\]
which can be further transformed into 
\begin{align*}
  & \quad \frac{\frac{1}{2} \left( \vol(S) - e(S, \bar{S}) \right) - \frac{\epsilon}{2} \vol(S \cap \bar{R})}{|S|} \\
  &= \frac{\vol(S \cap R) /2 - e(S, \bar{S})/2 + (1 - \epsilon) \vol(S \cap \bar{R})/2}{|S|}.  
\end{align*}
This means when \(\epsilon < 1\), including vertices outside \(R\) can not only decrease the cut size but also add some volume reward. 

We show the following result which discourages strong locality when \(\epsilon < 1\).    
\begin{theorem}
	Assume the rank of \(\cH\), and \(\epsilon < 1\) are constants. 
	There does not exist a universal polynomial \(f(R)\) only with regard to quantities related to
	seed set \(R\) such that the optimal \(S^*\) of Problem \ref{prob:adsh} or \ref{prob:adshf} has size bounded by \(f(R)\). 
\end{theorem}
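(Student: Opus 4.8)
The plan is to build, for each fixed \(\epsilon<1\), an explicit family of \emph{graphs} (rank-\(2\) hypergraphs, so the hypothesis that \(r\) is constant is automatic) in which the seed set \(R\) has \emph{all} of its local quantities — \(|R|\), \(\vol(R)\), \(\Delta(R)\), \(\bar\Delta(R)\), \(|\cN_V(R)|\) — bounded by a constant depending only on \(\epsilon\), yet the optimal set has size growing without bound. On a graph every edge has size \(2\), so \(\ndeg(v)=\deg(v)/2\), hence \(\nvol(S)=\tfrac12\vol(S)\) and \(\epsilon\vol(S\cap\bar R)/2=\epsilon\nvol(S\cap\bar R)\); thus \(d(S)=\bar d(S)\) for every \(S\) and a single family settles both Problem~\ref{prob:adsh} and Problem~\ref{prob:adshf}.

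\textbf{The instance.} I would use the graph of Figure~\ref{fig:locality_counter}: vertex classes \(A,B,C\) of sizes \(a,b,c\), with \(A\) a clique, \(B\) and \(C\) independent sets, complete bipartite graphs between \(A\) and \(B\) and between \(B\) and \(C\), and seed set \(R=A\). I will fix a constant \(a\ge 2\) (say \(a=2\)) and a constant integer \(b\) with \(b(1-\epsilon)>(a-1)/2\) — both determined by \(\epsilon\) only — and let \(c\to\infty\). For an arbitrary \(S\) write \(\alpha=|S\cap A|\le a\), \(j=|S\cap B|\le b\), \(k=|S\cap C|\le c\). A direct count gives \(e[S]=\binom{\alpha}{2}+\alpha j+jk\) (there are no \(A\)–\(C\) edges) and \(\vol(S\cap\bar R)=j(a+c)+kb\) (each \(B\)-vertex has degree \(a+c\), each \(C\)-vertex has degree \(b\)), so \(d(S)=\bigl(\binom{\alpha}{2}+\alpha j+jk-\tfrac\epsilon2 j(a+c)-\tfrac\epsilon2 kb\bigr)/(\alpha+j+k)\).

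\textbf{The argument.} First I would record the lower bound \(d^*\ge d(V)=\bigl(\binom a2+(1-\tfrac\epsilon2)ab+(1-\epsilon)bc\bigr)/(a+b+c)\), which tends to \((1-\epsilon)b>(a-1)/2=d(R)\) as \(c\to\infty\); hence for all large \(c\) we have \(d^*>(a-1)/2>0\). Next I would bound the density of ``small'' candidate sets in two cases. If \(j=0\), then \(e[S]=\binom{\alpha}{2}\) and \(d(S)\le \binom{\alpha}{2}/\alpha=(\alpha-1)/2\le(a-1)/2<d^*\) (with \(0/0=-\infty\) handling \(\alpha=0\)), so the optimum must have \(j\ge1\). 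If \(j\ge1\), then using \(\binom{\alpha}{2}\le\binom a2\), \(\alpha j\le ab\), \(jk\le bk\), and \(j(a+c)\ge c\), the numerator of \(d(S)\) is at most \(\binom a2+ab-\tfrac\epsilon2 c+(1-\tfrac\epsilon2)bk\); so \(d(S)>0\) forces \(k>\bigl(\tfrac\epsilon2 c-\binom a2-ab\bigr)/\bigl(b(1-\tfrac\epsilon2)\bigr)\), which is \(\Omega(c)\). Applying this to the optimum \(S^*\) — which has \(j\ge1\) by the first case and \(d(S^*)=d^*>0\) — yields \(|S^*|\ge|S^*\cap C|=\Omega(c)\). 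Since \(|R|=a\), \(\vol(R)=a(a-1+b)\), \(\Delta(R)=a-1+b\), \(|\cN_V(R)|=a+b\) are all bounded by a constant \(K=K(\epsilon)\), any fixed polynomial \(f\) of such quantities satisfies \(f(R)\le M\) for a constant \(M=M(\epsilon,f)\); choosing \(c\) large enough that the \(\Omega(c)\) bound exceeds \(M\) contradicts \(|S^*|\le f(R)\). The same instances give the statement for Problem~\ref{prob:adshf} by the coincidence \(d=\bar d\) on graphs.

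\textbf{Main obstacle.} The delicate part is the case analysis ruling out ``small'' optima: one must observe that including \emph{any} vertex of \(B\) while omitting a \(\Theta(c)\) fraction of \(C\) is fatal, because \(B\)-vertices carry degree \(\Theta(c)\) into \(\bar R\), while including \(B\) together with \(\Omega(c)\) vertices of \(C\) must still be verified to beat the best purely \emph{local} density \((a-1)/2\). This balance is exactly where \(\epsilon<1\) enters: the whole-graph density tends to \((1-\epsilon)b\), which is positive and can be made to exceed \((a-1)/2\) only when \(\epsilon<1\); for \(\epsilon\ge1\) this limit is non-positive and the construction collapses, consistent with the positive result obtained from Lemma~\ref{lem:opt-S}. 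I would also double-check the degenerate boundary cases \(\alpha=0\), \(k=0\), and the convention \(0/0=-\infty\) at each step.
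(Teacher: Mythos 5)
Your proof is correct and uses the same instance as the paper (the graph of Figure~\ref{fig:locality_counter}), but the analysis takes a genuinely different and more economical route. The paper identifies the optimizer explicitly by repeatedly applying a fractional-programming extreme-point observation to optimize sequentially over \(z\), then \(y\), then \(x\), eventually establishing that the optimum places \(y=b\), \(z=c\) and hence \(|S^*|\ge b+c\). You sidestep that multi-page case analysis entirely: you lower-bound \(d^*\) by \(d(V)\to(1-\epsilon)b\), then rule out any set with \(j=0\) (density at most \((a-1)/2 < d^*\)) and show any set with \(j\ge 1\) and positive density must satisfy \(k=\Omega(c)\), so in particular the optimum does. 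This gives \(|S^*|=\Omega(c)\) without ever computing the optimizer, which is tighter in spirit to what the theorem actually needs. You also handle Problem~\ref{prob:adshf} for free by noting \(d=\bar d\) on graphs, a unification the paper does not make explicit.

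One small caveat applies to both your proof and the paper's: the step that forces \(k=\Omega(c)\) uses the term \(\tfrac{\epsilon}{2}j(a+c)\ge\tfrac{\epsilon}{2}c\), which is vacuous when \(\epsilon=0\). The paper's corresponding inequality (``\(\frac{\binom x2-\epsilon bc/2}{x+c}\) is negative'') likewise degenerates at \(\epsilon=0\). The theorem implicitly assumes \(\epsilon>0\); for \(\epsilon=0\) the objective reduces to unanchored DSG where strong locality fails trivially, so this does not affect correctness, but it is worth flagging. Otherwise your argument is clean, the degree and edge counts are right, and the degenerate-case handling (\(\alpha=0\), \(k=0\), and the \(0/0=-\infty\) convention) is careful. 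The proposal is sound.
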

\begin{table}[t]
  \caption{Comparison between our Density Improvement (DI) Framework shown in Algorithm~\ref{alg:density-improve} and the standard binary search (BS)
    for the ordinary dense subgraph discovery problem.
  Time is the running time in seconds and iterations represent the number of subproblems solved.}
  \label{tab:di-dsg}
  \begin{tabularx}{\linewidth}{lXXXXXX}
  \toprule
  Datasets       & \(n\) & \(m\) & DI &  & BS &  \\ 
  \cmidrule(r){4-5}
  \cmidrule(r){6-7}
  & & & time  & iters & time & iters \\
  \midrule
HepPh & 11204 & 117619          & 0.29 & 4   &  1.97 & 35   \\
AstroPh  & 17903   & 196972     & 1.17 & 7   &  5.21 & 37   \\
Email-Enron & 33696   & 180811  & 1.01 & 7   &  4.95 & 40   \\
com-amazon  & 334863  & 925872  & 16.97& 11  &  51.4 & 45   \\
com-youtube & 1134890 & 2987624 & 38.1 & 9   & 190.6 & 55   \\
  \bottomrule
\end{tabularx}
\end{table}
\begin{proof}
  We give a counterexample on a normal graph, which is a special case of a hypergraph.
  Consider the example in Figure~\ref{fig:locality_counter}, where the seed set \(R\) forms a clique and its density is \(\frac{a-1}{2}\). Also, \(\vol(R) = a(a-1+b)\) is independent of \(c\).
  We take \(b, c\) such that \(c \gg b \gg a\).
	
	Assume for a set \(S\), \(|S \cap A| = x, |S \cap B| = y, |S \cap C| = z\). 
	Then the objective for \(S\) becomes 
	\begin{align}
		\label{eq:14}
		\max_{\substack{a \geq x \geq 0, b \geq y \geq 0, \\ c \geq z \geq 0, x + y + z > 0}}\frac{ \binom{x}{2} + y(x + z) - \frac{\epsilon}{2} \left( y(a + c) + zb \right)}{x + y + z}.
	\end{align}
        We study when the maximum is achieved. Our proof strategy is to optimize over variables \(z, y, x\) sequentially to eliminate them one by one.
        In other words, we find the optimal \(z\) as a function of \(y, x\),
        then the optimal \(y\) as a function of \(x\), and in the end the optimal \(x\). We do not write out the functions explicitly but instead pay attention to the
        conditions when optimum is achieved.

        Our proof heavily relies on the following observation.

        \begin{observation}
          \label{lem:frac-prog-opt}
          When we are maximizing any fraction with the form 
          \[\frac{MX + N}{X + Q}\]
          over \(X \in [L, U]\) where \(M, N, L, U, Q\) are given real numbers, \(L, U\) are nonnegative, and \(Q\) is positive,
          the maximum is achieved on one of the two extreme points, \(X = U\) or \(X = L\). 
        \end{observation}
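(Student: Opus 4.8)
The plan is to observe that the map $X \mapsto (MX+N)/(X+Q)$ is monotone on the whole interval $[L,U]$, from which the claim is immediate since a monotone function on a closed interval attains its extrema at the endpoints.

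First I would check that the objective is well-defined and continuous on $[L,U]$: since $X \geq L \geq 0$ and $Q > 0$, we have $X + Q \geq L + Q > 0$ throughout the interval, so the denominator never vanishes and the ratio is a smooth function of $X$ on $[L,U]$.

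Next I would perform the polynomial division
\[
\frac{MX+N}{X+Q} \;=\; \frac{M(X+Q) + (N - MQ)}{X+Q} \;=\; M + \frac{N - MQ}{X+Q}.
\]
Because $X+Q$ is strictly positive and strictly increasing in $X$ on $[L,U]$, the term $(N-MQ)/(X+Q)$ is monotone in $X$: it is strictly decreasing if $N - MQ > 0$, strictly increasing if $N - MQ < 0$, and identically zero if $N - MQ = 0$. Hence $(MX+N)/(X+Q)$ is monotone (nondecreasing or nonincreasing, possibly constant) on $[L,U]$, so its maximum over the interval is attained at $X = L$ or $X = U$. (One could alternatively compute the derivative $(MQ - N)/(X+Q)^2$, which has constant sign on the interval, and conclude the same way.)

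There is essentially no obstacle here; the only point requiring care is confirming the denominator stays strictly positive, which is exactly what the hypotheses $L \geq 0$, $U \geq 0$, and $Q > 0$ guarantee. I would state the argument in one short paragraph using the rewriting above.
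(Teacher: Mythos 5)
Your proposal is correct and uses essentially the same argument as the paper: rewrite the ratio as $M + (N-MQ)/(X+Q)$ and note that this is monotone in $X$ since the denominator is strictly positive, so the maximum is attained at an endpoint. You simply spell out the monotonicity step that the paper leaves implicit.
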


	\begin{proof}
          We have
          \begin{align}
		\max_{U \ge X \ge L} \frac{MX + N}{X + Q} &= \max_{U \ge X \ge L} \frac{M(X + Q) + N - MQ}{X + Q}  \nonumber \\
		&= \max_{U \ge X \ge L} M + \frac{N - MQ}{X + Q} \nonumber \\
		&= \max\{\frac{MU+N}{U+Q}, \frac{ML + N}{L+Q}\}. \label{eq:15}
          \end{align}
	\end{proof}
\begin{figure}[t]
  \centering
  \includegraphics[width=0.8\linewidth]{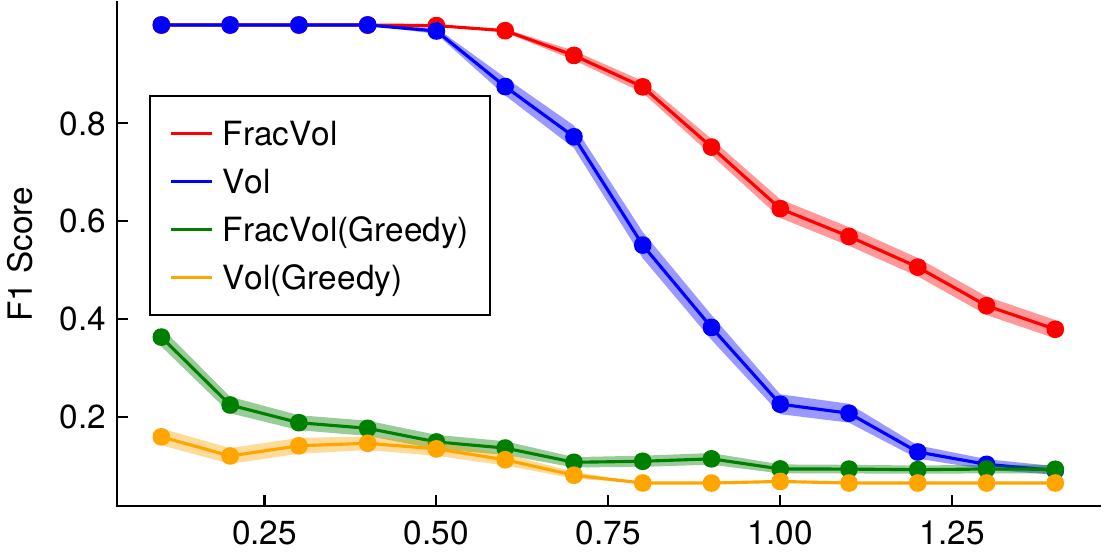} 
  \caption{Solving Probs~\ref{prob:adsh},~\ref{prob:adshf} \emph{exactly} outperforms
  \emph{approximating} them using greedy peeling.}
\label{fig:greedy-vs-flow}
\end{figure}
        To avoid the corner case that \(x + y + z = 0\), we divide the discussion into two subcases.
	First, we deal with the special case that \(x = 0\). When \(x = 0\), the objective~\eqref{eq:14} becomes
	\begin{align}
          \label{eq:100}
		\max_{\substack{b \geq y \geq 0, c \geq z \geq 0, \\ y + z \geq 1}} \frac{yz - \epsilon/2 \left( y(a + c) + zb \right) }{y+z}.
	\end{align}
 Notice that if we plug in \(y = b, z = c\) then we get \(\frac{(1 - \epsilon) bc - \epsilon ab/2}{b + c}\) which is positive as we assume \(\epsilon\) is a constant and \(c \gg b \gg a\). This
 implies that any negative number cannot be the maximum of objective \eqref{eq:100}.
 
	If \(y = 0\), then we can see it is a constant function with value \(-\frac{\epsilon b}{2} < 0\). If \(y > 0\), then objective \eqref{eq:100} is equivalent to
        \begin{align}
        \label{eq:101}
          \max_{b \geq y \geq 1} \max_{ c \geq z \geq 0} \frac{yz - \epsilon/2 \left( y(a + c) + zb \right) }{y+z}.
        \end{align}
        Using Observation~\ref{lem:frac-prog-opt}, we see that 
        \begin{align*}
         & \quad \max_{ c \geq z \geq 0} \frac{yz - \epsilon/2 \left( y(a + c) + zb \right) }{y+z} \\
          & = \max \{ -\epsilon(a+c)/2, \frac{yc - \epsilon/2\left(y(a + c) + bc\right)}{y+c} \}.
        \end{align*}
        As \(-\epsilon(a + c)/2\) is a negative constant, we have
	\begin{align*}
        \eqref{eq:101} & = \max_{b \geq y \geq 1} \frac{yc - \epsilon/2\left(y(a + c) + bc\right)}{y+c} \\
         & = \max \{\frac{(1 - \epsilon) bc - \epsilon ab/2 }{b+c }, \frac{c(1 - \epsilon(1+b)/2) - \epsilon a/2}{1 + c} \} \\
         & = \frac{(1 - \epsilon) bc - \epsilon ab/2 }{b+c },
 \end{align*}
 where in the second equality we use observation \ref{lem:frac-prog-opt} again and the maximum is
 achieved when \(y = b, c = z\). Now we turn to the main case that \(x > 0\), and we can drop the 
 constraint \(x + y + z > 0\) and write objective \eqref{eq:14} as
 \begin{align}
 \label{eq:102}
		\max_{ a \geq x \geq 1} \max_{b \geq y \geq 0} \max_{c \geq z \geq 0}\frac{\binom{x}{2} + xy + yz - \epsilon/2 \left( y(a + c) + zb \right)}{x + y + z}.
 \end{align}
    Apply observation \ref{lem:frac-prog-opt} on variable \(z\), we get	
	\begin{align}
 \label{eq:103}
	\eqref{eq:102}	
		&= \max_{ a \geq x \geq 1} \max_{b \geq y \geq 0}  \max \{ \frac{\binom{x}{2} + xy + yc - \epsilon/2 \left( y(a + c) + bc \right)}{x + y + c}, \nonumber \\
		& \quad \frac{\binom{x}{2} + xy - \epsilon/2 \left( y(a + c)\right)}{x + y} \}. 
        \end{align} 
    Again, we apply observation \ref{lem:frac-prog-opt} on variable \(y\), we get
        \begin{align*}
         \eqref{eq:102} &= \eqref{eq:103} \\
		&= \max_{a \geq x \geq 1} \max \{ \frac{\binom{x}{2} + xb + bc - \epsilon/2 \left( b(a + c) + bc \right)}{x + b + c},
		\frac{\binom{x}{2}  - \epsilon bc /2  }{x + c} , \\
		& \quad \frac{\binom{x}{2} + xb - \epsilon/2 \left( b(a + c)\right)}{x + b}, \frac{\binom{x}{2} }{x} \}. 
  \end{align*}
  Observe that by \(x \le a\), our assumption that \(\epsilon < 1\) is a constant and \(c \gg b \gg a\),
  we have that \(\frac{\binom{x}{2}  - \epsilon bc /2  }{x + c}\) and \(\frac{\binom{x}{2} + xb - \epsilon/2 \left( b(a + c)\right)}{x + b}\) are both negative.
  Hence 
  \begin{align*}
	\eqref{eq:102}	&= \max_{a \geq x \geq 1} \max \{ \frac{\binom{x}{2} + xb + bc - \epsilon/2 \left( b(a + c) + bc \right)}{x + b + c}, \frac{x - 1}{2} \}.  
	\end{align*}
    When \(x = a\),
  \begin{align*}
	&\quad  \frac{\binom{x}{2} + xb + bc - \epsilon/2 \left( b(a + c) + bc \right)}{x + b + c} \\
  &= \frac{\binom{a}{2} + (1 - \epsilon/2) ab + (1 - \epsilon)bc }{a + b + c} \\
  &\ge \frac{a - 1}{2} = \max_{a \ge x \ge 1} \frac{x - 1}{2}   
	\end{align*}
    as \(c \gg b \gg a\). 
    Therefore 
\begin{align*}
    \eqref{eq:102} = \max_{a \geq x \geq 1} \frac{\binom{x}{2} + xb + bc - \epsilon/2 \left( b(a + c) + bc \right)}{x + b + c}.
\end{align*}	
And whatever value \(x\) takes, the maximum is achieved when \(y = b, z = c\).
	Summarize all the cases above, the optimal \(S\) has to contain the whole \(B\) and  \(C\), which means \(|S^*| \geq b + c\). Since \(|R| = a, \vol(R) = a(a-1+b)\), which are both
	independent of \(c\), so we are unable to find a universal polynomial which can bound the size of optimal \(S\).
\end{proof}

\section{additional experiments}
\begin{figure}[t]
  \centering
  \begin{subfigure}{0.45\linewidth}
    \centering
  \includegraphics[width=1.0\linewidth]{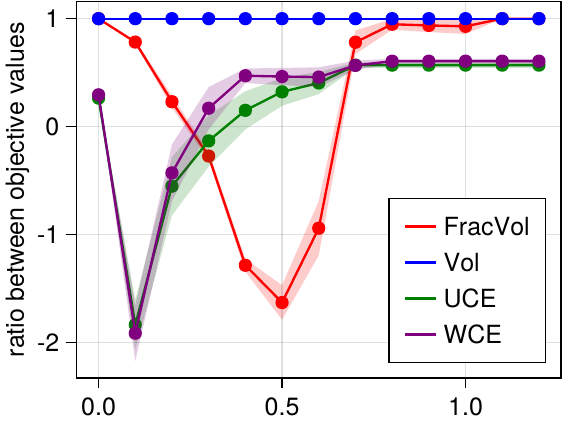} 
  \caption{Walmart Trips}
  \end{subfigure}
  \begin{subfigure}{0.45\linewidth}
    \centering
  \includegraphics[width=1.0\linewidth]{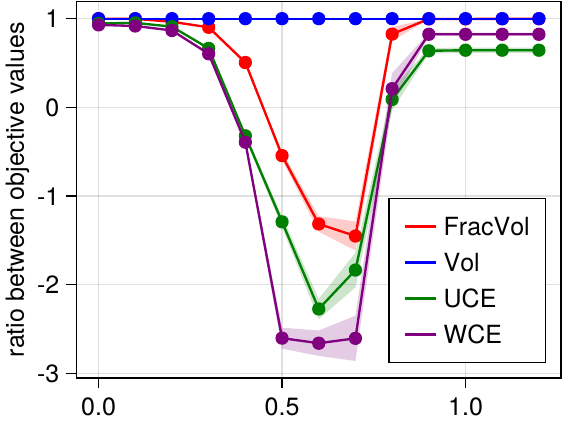}  
  \caption{Trivago Clicks}
  \end{subfigure} \\
  \begin{subfigure}{0.45\linewidth}
    \centering
  \includegraphics[width=1.0\linewidth]{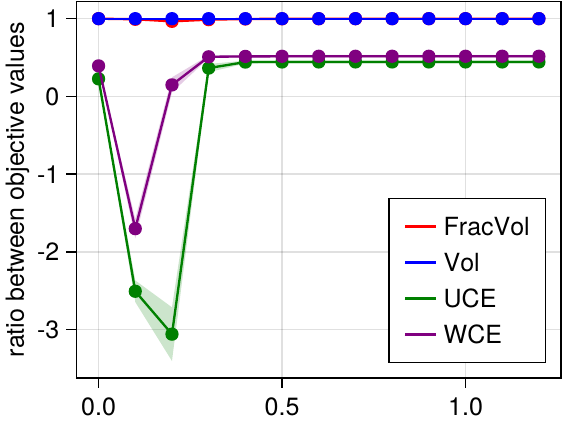} 
  \caption{Threads Ask Ubuntu}
  \end{subfigure}
  \begin{subfigure}{0.45\linewidth}
    \centering
  \includegraphics[width=1.0\linewidth]{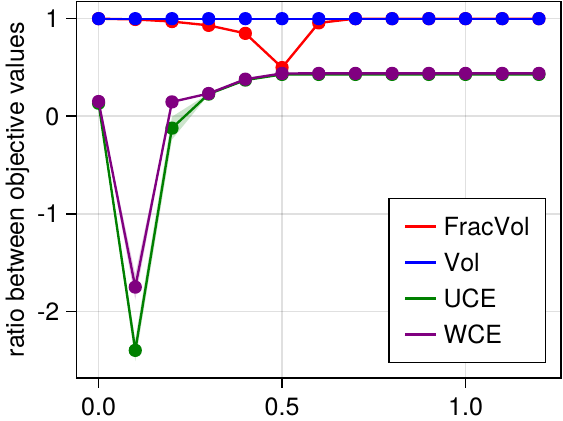} 
  \caption{Threads Math SX}
  \end{subfigure}
  \caption{How well solving objective \ref{prob:adshf} or running the anchored densest subgraph objective \ref{prob:ads} on the weighted or unweighted clique expansion of the hypergraph can approximately solve Problem \ref{prob:adsh}. Objective values are
  normalized by the optimum.}
  \label{fig:obj-comparison}
\end{figure}

To demonstrate the effectiveness of our proposed methods, we perform some additional experiments. 

\subsection{Density Improvement}
\label{sec:density-improvement-dsg}

We show that the density improvement framework described in Algorithm~\ref{alg:density-improve} also has good performance
on the ordinary dense subgraph discovery problem. We perform comparison experiments against binary search on five real-world datasets,
HepPh, AstroPh~\cite{LKFGraph2007}, Email-Enron~\cite{leskovec2009community,KYEnron2004}, com-amazon, and com-youtube~\cite{YLDefining2015}. The experimental results are summarized in Table~\ref{tab:di-dsg}.
We can see that on all five datasets, our framework exhibits consistent improvement over binary search.  Also, this Dinklebach-like framework demonstrated usefulness in other graph tasks, e.g. cut improvement, see a survey~\cite{Fountoulakis2023flow}.

\subsection{Comparison with Greedy Peeling}
Greedy peeling is a prevalent approach for densest subgraph discovery because of
its simplicity and efficiency. However, as shown in Section~\ref{sec:counter-greedy-peeling},
greedy peeling may fail when the supermodular function \(f\) is not guaranteed to be non-negative.
Here we conduct further experiments to show some empirical evidence.
On hypergraphs with planted dense sets, we apply greedy peeling
on objectives \ref{prob:adsh} and \ref{prob:adshf}, and compare the returned solutions with the
exact solutions computed via flow. The results are shown in Figure~\ref{fig:greedy-vs-flow}. We can see that greedy peeling
exhibits much worse F1 score.

\subsection{Comparison with Anchored Densest Subgraph}
In Section~\ref{sec:exp-adshg}, we compare our algorithms against running the anchored
densest subgraph algorithm on the clique expansion of hypergraphs about their abilities of
detecting planted dense sets. Here we provide some more empirical evidence with regard to 
their objective values. For a seed set \(R\) and a specific locality parameter \(\epsilon\),
we solve objectives \ref{prob:adsh} and \ref{prob:adshf} on the hypergraph
and objective \ref{prob:ads} on the weighted clique expansion and
unweighted clique expansion of the hypergraph,
and then compute objective \ref{prob:adsh} of the dense sets they detect.
In other words, we are comparing how well each of these methods does at approximating objective \ref{prob:adsh}.
So we normalize the objective of all
other sets by the objective of the set computed by solving objective \ref{prob:adsh} and see how far they are from 1. From Figure~\ref{fig:obj-comparison},
we can see that running the anchored densest subgraph algorithm usually cannot give a good result
in terms of objective \ref{prob:adsh}.

\section{Dataset Information}
We provide some more information about the datasets we use. The 5 real-world hypergraph datasets
in Section~\ref{sec:exp-di} are available at \url{https://www.cs.cornell.edu/~arb/data/}
and the 5 real-world graph datasets in Section~\ref{sec:density-improvement-dsg} are available
at \url{https://snap.stanford.edu/data/}. We summarize them in
Table~\ref{tab:datasets}.

\begin{table}
  \caption{Short descriptions of datasets used in the paper.}
  \label{tab:datasets}
  \begin{tabularx}{\linewidth}{lX}
  \toprule
  Datasets       & Description \\ 
  \midrule
  Walmart       & sets of products bought on Walmart shopping trips, where labels are departments of products \\
  Trivago       & sets of hotels clicked on in a Web browsing session, where labels are the countries of the accomodation \\
  Math SX  & sets of users asking and answering questions on threads on math exchange \\
  Ask Ubuntu & sets of tags applied to questions on askubuntu \\
  Amazon     & sets of products reviewed by users on Amazon, where labels are product categories \\
  ca-AstroPh    & Arxiv Astro Physics collaboration network \\
  ca-HepPh      & Arxiv High Energy Physics (Phenomenology) collaboration network \\
  Email-Enron   & email communication network related to Enron \\
  com-amazon    & product co-purchasing network on amazon \\
  com-youtube   & social network based on friendship on youtube \\ 
  \bottomrule
\end{tabularx}
\end{table}

\section{Ethics and Data}
All of the data we use are publicly available and we do no mining of the data for specific human identifiable attributes. Some of the hypergraph data is based on public human activity, but we only use those experiments to calibrate performance on commonly used datasets.
Our case study \balance on the web graph is only based on linking patterns among web hosts and domains. Our dense subhypergraph tools have the potential to be used to identify extremal sets, which -- like many general purpose mining tools -- could be used maliciously to infer attributes that people would prefer stay secret if the information was represented by a dense graph.
However, we believe that dense subgraph analysis and subhypergraph analysis is a common algorithmic framework that has substantial non malicious uses including novel studies of graph data and characterizing dense interconnections in biological networks. 
\end{document}